\newtheorem{observation}{Observation}
\newtheorem{definition}{Definition}
\newtheorem{theorem}{Theorem}
\newtheorem{corollary}{Corollary}[theorem]
\newcommand{\oper}[1]{\ket{#1}\bra{#1}}
\begin{document}
\title{Locally unidentifiable subset of quantum states\\
and\\
its resourcefulness in secret password distribution}

\author{Pratik Ghosal}
\affiliation{Department of Physical Sciences, Bose Institute, EN 80, Sector V, Bidhan Nagar, Kolkata 700 091, India}

\author{Arkaprabha Ghosal}
\affiliation{Optics and Quantum Information Group, The Institute of Mathematical Sciences, CIT Campus, Taramani, Chennai 600113, India}

\author{Subhendu B. Ghosh}
\affiliation{Physics and Applied Mathematics Unit, Indian Statistical Institute, Kolkata, 203 B. T. Road, Kolkata 700108, India}

\author{Amit Mukherjee}
\affiliation{Indian Institute of Technology, Jodhpur-342030, India}

\begin{abstract}

    We introduce a hitherto unexplored form of quantum nonlocality, termed \textit{local subset unidentifiability}, that arises from the limitation of spatially separated parties to perfectly identify a \textit{subset} of mutually orthogonal multipartite quantum states, randomly chosen from a larger known set, using Local Operations and Classical Communication (LOCC). We show that this nonlocality is stronger than other existing forms of quantum nonlocality, such as local indistinguishability and local unmarkability. If more than one multipartite states from a locally indistinguishable set are distributed between spatially separated parties in a sequentially ordered fashion, then they may or may not mark which state is which using LOCC. However, we show that even when the parties cannot mark the states, they may still locally identify the particular states given to them, though not their order---\textit{i.e.}, they can identify the elements of the given subset of states. Then we prove the existence of such subsets that are not even locally identifiable, thereby manifesting a stronger nonlocality. We also present the genuine version of this nonlocality---\textit{genuine subset unidentifiability}---where the provided subset remains unidentifiable unless \textit{all} the parties come together in a common location and perform global measurements. We anticipate potential applications of this nonlocality for future quantum technologies. We discuss one such application in a certain \textit{secret password distribution protocol}, where this nonlocality outperforms its predecessors as a resource.
   

\end{abstract}

\maketitle
\textit{Introduction.}-  In any operational theory, if two or more
spatially separated communicating agents are able to
perform a distributed task then it is trivially executable
when they collaborate or come together in the same
lab. The converse statement is also true in classical
world. However, in the quantum regime, the scenario gets
more complex. There exist distributed quantum tasks
that can not be carried out locally, even though they
are perfectly accomplishable when the agents collaborate. This phenomenon exhibits a stark deviation of
quantum theory from the classical worldview.
Usually, this anomaly is termed as \textit{quantum nonlocality} \cite{peres,bennett1999quantum}. {This is different from the celebrated \textit{Bell-nonlocality} that deals with the incapability of explaining correlations by local-realism \cite{bell}.}
{Rather, this}
quantum nonlocality arises solely due to the intricate structure
and topology of the quantum state space. It has huge significance in both application \cite{dh1,dh2,dh3,dh4,ss} and theory. The indistinguishability of {certain} mutually orthogonal multipartite quantum states by local operations and classical communication (LOCC) is one of the basic quantum nonlocal phenomena. In the past decades, a plethora of intriguing results were developed along this line \cite{lid1,lid2,lid3,lid4,lid5,lid6,lid7,lid10,lid11,lid12,lid13,lid14,lid15,lid16,lid17,lid18,lid19,lid20,lid21,lid22,lid23,lid24,lid25,lid26,lid28,lid29,lid30,lid31,lid32,lid33,lid34,lid35,lid36,lid37,lid38,lid39,ghosh2022activating,gupta}. Very recently, some other distributed quantum tasks have also been developed that manifest \textit{stronger} forms of quantum nonlocality than the local indistinguishability \textit{viz.}, local state irreducibility \cite{halder}, local unmarkability \cite{sen} etc. As mentioned earlier, impossibility of locally realising some distributed tasks gives rise to these nonlocalities. These tasks can broadly be divided into two paradigms -- (i) \textit{local state discrimination} and, (ii) \textit{local state elimination}. Local indistinguishability and unmarkability are examples of the former, whereas local state irreducibility is for the later. In this work, we focus on the first paradigm and present an even stronger form of quantum nonlocality, exhibited by certain sets of mutually orthogonal entangled states. To demonstrate it we introduce a distributed task -- \textit{Local Subset Identification} (LSI). 
From a known set of mutually orthogonal states, {a subset of} \textit{more than one} state {is} chosen and shared among multiple spatially separated agents (see Fig. \ref{lsi}). The objective of the task is to perfectly identify the {subset, i.e., the identity of the} states {within it,} via LOCC. For example, take a set of bipartite orthogonal states. Consider that \textit{any two} states from the set are shared between spatially separated agents. Here, LSI asks {the agents} to perfectly locally  identify which \textit{two} states are given to them using LOCC. {We show that the inability to perfectly accomplish this task for a set demonstrates a stronger form of quantum nonlocality than those discussed in the existing literature, termed \textit{Local Subset Unidentifiability}.}  
At the core of this proposed nonlocality lies the inability to perfectly distinguish certain sets of subspaces of rank more than one using LOCC. Since the task of addressing (in)distinguishability of sets of subspaces is naturally more complex than that of sets of vectors, our proposed nonlocality stands out compared to the previous nonlocalities. Moreover, we also deliver an information processing application of the proposed nonlocality.

Dispersal of information to maintain its secrecy is crucial in information processing. Our proposed nonlocality also provides an interesting application in this line. Local subset unidentifiability furnishes heightened security in some secret password sharing scheme. We demonstrate that other nonlocal resources \textit{viz.} local unmarkability if used in this task keeps the possiblity of significant information leakage, whereas local unidentifiability {gives significant advantage over local unmarkability}. This makes local unidentifiability potentially useful in distributed protocols to support the emerging quantum internet technology \cite{inta,intb}.   

Furthermore, any nonlocal feature of quantum systems gets more intricate when multipartite scenario comes into the picture. In case of LSI, we also explore scenarios involving more than two spatially separated agents. Interestingly, we come up with a further stronger version of the nonlocality we are introducing here. In particular, we present sets of multipartite states that shows local subset unidentifiability (or \textit{locally unidentifiable}) when all the agents are spatially separated. We show that these sets retain local unidentifiablity in all possible bi-partitions. Therefore, to perfectly accomplish the LSI task, \textit{all} the agents need to come together or must resort to additional quantum resources. We term it as \textit{Genuine Unidentifiablity}. We also illustrate that any set that is genuinely locally unidentifible must also show \textit{genuine}  unmarkability which is hitherto a uncharted notion.

\textit{Local Subset Identification.}- 
We are now in a position to articulate the formal definition of our task (see Fig. \ref{lsi}). 
\begin{definition}[$(n,\abs{\mathcal{S}'})$-Local Subset Identification]
Consider a set $\mathcal{S}=\lbrace \ket{\psi_1},\ket{\psi_2},\ldots,\ket{\psi_n} \rbrace \subset \bigotimes_{k=1}^N \mathbb{C}^{d_k}$ of $n$ $N$-party orthogonal quantum states. A subset $\mathcal{S}'\subset \mathcal{S}$ containing  $1\le \abs{\mathcal{S}'}<n$ quantum states is randomly chosen and distributed among spatially separated {classically communicating} agents keeping its identity hidden. The task of $(n,\abs{\mathcal{S}'})$-local identification is to perfectly determine the elements of the set $\mathcal{S}'$.
\end{definition}

We call a set $\mathcal{S}$ to be {\textit{locally} $(n,\abs{\mathcal{S}'})$\textit{-identifiable} or simply $(n,\abs{\mathcal{S}'})$\textit{-identifiable}} if \textit{all possible} subsets of $\mathcal{S}$ with the same cardinality as $\abs{\mathcal{S}'}$ are locally identifiable, otherwise we call it {\textit{locally} $(n,\abs{\mathcal{S}'})$\textit{-unidentifiable} or simply $(n,\abs{\mathcal{S}'})$\textit{-unidentifiable}}. It is evident that $\abs{\mathcal{S}'}=n$ is a trivial question in the LSI framework, {as the elements of the entire set $\mathcal{S}$ is already known to the agents}. On the other extreme, when $\abs{\mathcal{S}'}=1$, this boils down to a well known task:
\begin{figure}[t]
	\centering
	\includegraphics[width=3.5in]{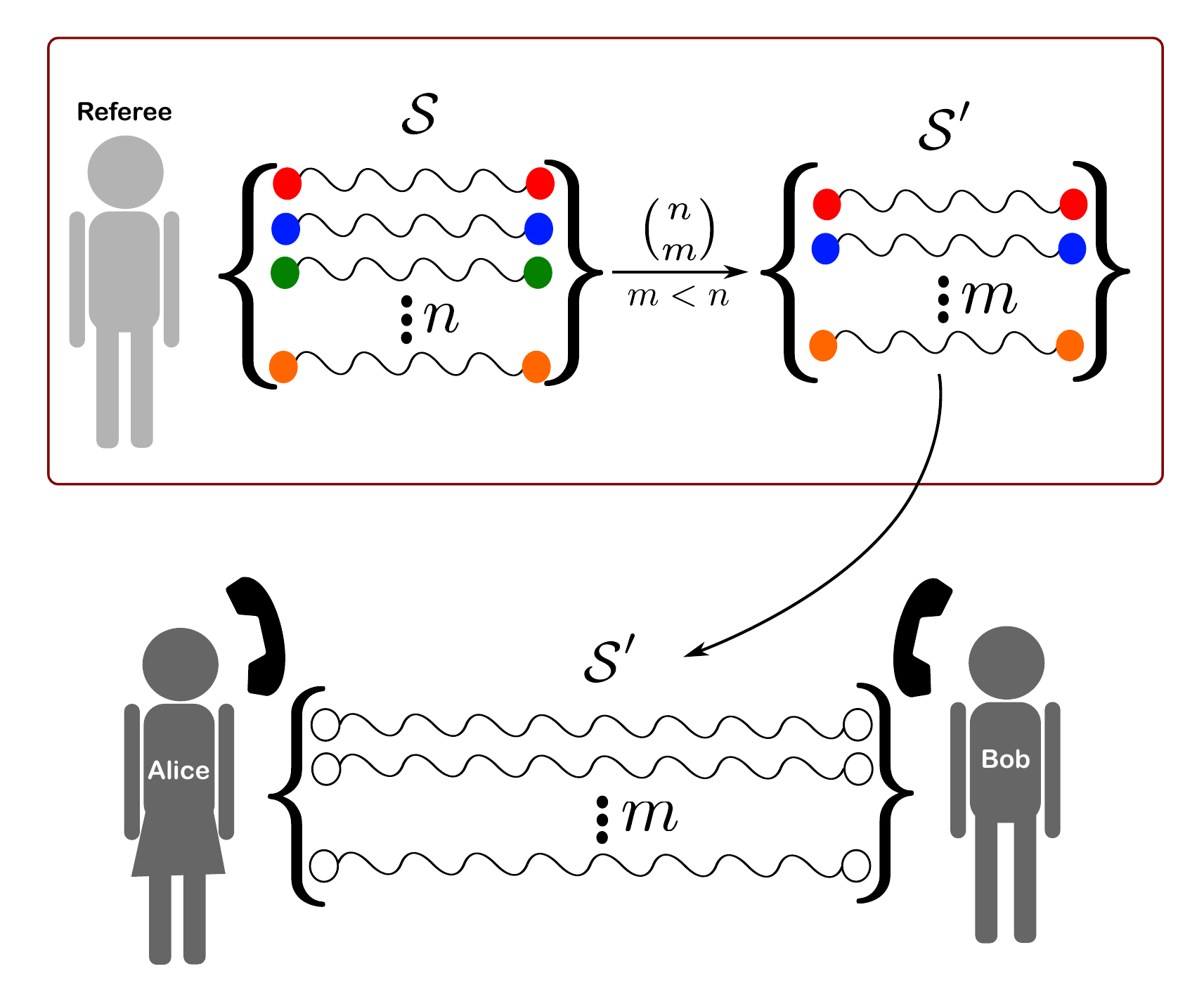}
	\caption{Local Subset Identification. Referee chooses $m$ states from a set of $n$ multipartite states, denoted by $\mathcal{S}$. The newly chosen set is denoted as $\mathcal{S}'$. The set $\mathcal{S}'$ is then shared between the spatially separated agents, say, Alice and Bob who do not know the constituent elements of $\mathcal{S}'$. The main task of the agents is to perfectly identify the elements of $\mathcal{S}'$ with the help of LOCC only. However, the order of the elements is not required.}
	\label{lsi}
\end{figure}
{
 \begin{observation}
{For a given set $\mathcal{S}$ the task of $(n,1)$-LSI corresponds to the well-known task of Local State Discrimination (LSD) \cite{ld}.}
\end{observation}}

Another related distributed task is {Local State Marking} (LSM) \cite{sen}. For the sake of completeness and comparison, we briefly describe it here. {Consider the set $\mathcal{S}=\{\ket{\psi_i}_{\mathtt{AB}}\}_{i=1}^n$ of $n$ bipartite orthogonal states from which a subset $\mathcal{S}'$ is shared between two spatially separated agents. Whereas the task of LSI is to locally identify \textit{just} the elements of the subset, the task of $(n,\abs{\mathcal{S}'})$-LSM is to perfectly determine the \textit{order as well} in which the elements are given to the agents. For example, consider $\mathcal{S}'=\{\ket{\psi_j},\ket{\psi_k}\}$ for any $j<k\in\{1,\cdots,n\}$ \cite{setorder}. These two states can be shared between the agents in two distinct orders: $\ket{\psi_j}_{\mathtt{A_1B_1}}\otimes\ket{\psi_k}_{\mathtt{A_2B_2}}$ and $\ket{\psi_k}_{\mathtt{A_1B_1}}\otimes\ket{\psi_j}_{\mathtt{A_2B_2}}$, both representing the same subset $\mathcal{S}'$. LSI asks the agents just the identitiy of the two states $\ket{\psi_j}$ and $\ket{\psi_k}$. LSM additionally asks whether they share $\ket{\psi_j}_{\mathtt{A_1B_1}}\otimes\ket{\psi_k}_{\mathtt{A_2B_2}}$ or $\ket{\psi_k}_{\mathtt{A_1B_1}}\otimes\ket{\psi_j}_{\mathtt{A_2B_2}}$. Thus, the LSM task can be formulated as a higher-dimensional LSD problem of the set $\{\ket{\psi_i}_{\mathtt{A_1B_1}}\otimes\ket{\psi_{i'}}_{\mathtt{A_2B_2}}\}_{i\neq i'=1}^n$ containing $^nP_2$ states. This formalism can be extended to accommodate any number of parties and subsets with larger cardinalities.}
If for a specific set of states this distributed task can not be accomplished perfectly locally, the set is said to {exhibit} a stronger quantum nonlocality than the local indistinguishability known as local unmarkability. If the $(n,\abs{\mathcal{S}'})$-LSM task can be perfectly accomplished, we call the set $\mathcal{S}$ as $(n,\abs{\mathcal{S}'})$-markable, otherwise it is $(n,\abs{\mathcal{S}'})$-unmarkable. 
Clearly, LSI is an easier task than LSM. {Hence}, local unidentifiability is a stronger nonlocal phenomenon than local unmarkability as well as local indistinguishability \cite{compare}. 
  
{\begin{observation}
    If a set $\mathcal{S}$ is $(n,m)$-markable, then it readily follows that it is also $(n,m)$-identifiable.
\end{observation}}
{If the $N$ spatially separated agents can locally determine the order of the states of the given subset $\mathcal{S'}$, then the identity of the states { becomes} trivially known to them.} However, the converse of this statement is not necessarily true. 
\begin{theorem}\label{32}
If a set $\mathcal{S}$ is $(n,m)$-identifiable, then it is not necessarily $(n,m)$-markable.
\end{theorem}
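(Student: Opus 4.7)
The plan is to establish the separation by exhibiting an explicit counterexample: a set $\mathcal{S}$ of $N$-party orthogonal states together with a subset size $m$ satisfying $1<m<n$, such that every $m$-subset of $\mathcal{S}$ is locally identifiable while at least one $m$-subset of $\mathcal{S}$ is not locally markable. Because Lemma~2 already establishes that $(n,m)$-markability $\Rightarrow$ $(n,m)$-identifiability, producing a single such example refutes the converse and proves the theorem.

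The most natural class to examine is a set of bipartite maximally entangled (Bell-type) states with subset size $m=2$. The structural intuition is as follows. When two states $\ket{\psi_j}_{A_1 B_1}$ and $\ket{\psi_k}_{A_2 B_2}$ are shared, Alice holds a pair of particles and Bob holds a pair; the \emph{identity} of the unordered set $\{j,k\}$ is invariant under exchange of the two shared states, whereas the \emph{order} carries the antisymmetric information. For Bell-like families, the two orderings are related by the product of local swaps $\mathrm{SWAP}_{A_1 A_2}\otimes \mathrm{SWAP}_{B_1 B_2}$, which is a local unitary and therefore cannot be undone by any LOCC protocol. This directly suggests a route to non-markability while leaving the identification question open.

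Guided by this, the proof would proceed in two steps. Step 1: construct an explicit LOCC identification protocol for the chosen $\mathcal{S}$. A canonical strategy is for Alice and Bob each to perform suitable joint measurements on their two local particles (for instance, in a basis adapted to the symmetric/antisymmetric sectors of the collective two-particle space, or in a locally accessible basis spanning the unordered-pair subspace), and to combine outcomes via classical communication so as to pin down the unordered label set $\{j,k\}$. Step 2: show that for at least one $\mathcal{S}'\subset\mathcal{S}$ of size $m$, the two orderings of its elements are LOCC-indistinguishable; the cleanest route is to exhibit the product of local unitaries above that maps one ordering to the other, since local-unitarily equivalent states yield identical outcome statistics under any LOCC protocol.

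The main obstacle is calibrating the example so that both halves hold simultaneously: the ordering-swap must act as a genuine local unitary (forcing unmarkability) while the identity-extracting measurement must still be implementable by LOCC (ensuring identifiability). These requirements pull in opposite directions, since a large local-symmetry group tends to obstruct local extraction of any information, including identity. The delicate design point is to choose $\mathcal{S}$ so that its ordering symmetry preserves the identity information but not the order information; in practice, I would expect this to force a carefully tailored choice of entangled basis rather than a generic one, and the positive identification claim will probably require a direct protocol rather than a general argument.
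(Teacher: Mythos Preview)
Your overall architecture matches the paper's: exhibit a Bell-state example with $m=2$, give an explicit LOCC protocol for identification, and then argue unmarkability. The paper takes $\mathcal{S}=\{\ket{B_1},\ket{B_2},\ket{B_3}\}$; Alice and Bob each perform a Bell measurement on their two local qubits and compare outcomes, which reveals the unordered pair but not the order---essentially what you sketch in Step~1.

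Your Step~2, however, rests on a false premise. The assertion that ``local-unitarily equivalent states yield identical outcome statistics under any LOCC protocol'' is simply wrong: $\ket{00}$ and $\ket{11}$ are related by the local unitary $X\otimes X$ yet are trivially LOCC-distinguishable. More to the point, for \emph{any} fixed two-element subset $\mathcal{S}'=\{\ket{\psi_j},\ket{\psi_k}\}$ the two orderings $\ket{\psi_j}\otimes\ket{\psi_k}$ and $\ket{\psi_k}\otimes\ket{\psi_j}$ are two orthogonal pure states and hence \emph{are} perfectly LOCC-distinguishable by the Walgate--Short--Hardy--Vedral result. So your plan of showing ``for at least one $\mathcal{S}'$ the two orderings are LOCC-indistinguishable'' can never succeed; there is no such $\mathcal{S}'$.

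What actually makes the example unmarkable is a global obstruction, not a pairwise one. The $(3,2)$-marking task is, by definition, the LOCC discrimination of \emph{all} $^3P_2=6$ ordered pairs simultaneously. These are six maximally entangled states in $\mathbb{C}^4\otimes\mathbb{C}^4$, and since $6>4$ they cannot be perfectly LOCC-distinguished by the dimension bound of Ghosh--Kar--Roy--Sen(De)--Sen / Nathanson / Hayashi et al. That counting argument is the missing ingredient; the local-SWAP symmetry you identified is suggestive but does not by itself force indistinguishability. You should also commit explicitly to $n=3$: with $n=4$ Bell states the set is already $(4,2)$-\emph{unidentifiable} (Theorem~\ref{4uid} in the paper), so Step~1 would fail there.
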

\begin{proof}
{The proof is constructive. Consider a set of three Bell states: $\mathcal{S}=\lbrace \ket{B_i}_{\mathtt{AB}}\rbrace_{i=1}^3$, where $\ket{B_1}_{\mathtt{AB}}=\frac{1}{\sqrt{2}}(\ket{00}_{\mathtt{AB}}+\ket{11}_{\mathtt{AB}}), \ket{B_2}_{\mathtt{AB}}=\frac{1}{\sqrt{2}}(\ket{00}_{\mathtt{AB}}-\ket{11}_{\mathtt{AB}}), \ket{B_3}_{\mathtt{AB}}=\frac{1}{\sqrt{2}}(\ket{01}_{\mathtt{AB}}+\ket{10}_{\mathtt{AB}})$. This set is $(3,2)-$identifiable, but $(3,2)-$unmarkable.} 

{\textit{$(3,2)-$identifiability:} A subset containing two distinct Bell states from the set $\mathcal{S}$ can be chosen in ${3\choose2}=3$ ways, namely, $\mathcal{S}'_1 = \{\ket{B_1},\ket{B_2}\}, \mathcal{S}'_2= \{\ket{B_2},\ket{B_3}\}, \mathcal{S}'_3=\{\ket{B_3},\ket{B_1}\}$. One such subset is randomly selected and the states within it are distributed among two spatially separated agents, Alice and Bob (see Fig. \ref{lsi}). They can identify the elements of the given subset by performing the following measurement $\mathcal{M}_\alpha:=\{P^\alpha_i|\sum_{i=1}^4 P^\alpha_i=\mathbb{I}_4\}$, where $P^\alpha_i:=\ket{B_i}_\alpha\bra{B_i}$, $\alpha\in\{\mathtt{A}_1\mathtt{A}_2,\mathtt{B}_1\mathtt{B}_2\}$ and $\ket{B_4}=\frac{1}{\sqrt{2}}(\ket{01}-\ket{10})$, on their respective share of two qubits, and comparing their measurement outcomes via classical communication. See Appendix A for details. Similar result holds for any set of three of Bell states.}

{\textit{$(3,2)-$unmarkability:} However, Alice and Bob can never locally mark (i.e., distinguish between the orders of) two states from any set of three Bell states. This is because the task involves locally distinguishing a state from a set of $^3P_2=6$ maximally entangled states (MESs) $\in \mathbb{C}^4\otimes\mathbb{C}^4$, a scenario deemed impossible by Ref. \cite{Hayashi}. The reference asserts that perfect LSD of MESs becomes unattainable when the number of states exceeds the local dimension of the shared state.}
\end{proof}

{In Appendix A, we present additional examples of locally identifiable yet unmarkable sets. It's worth noting that if a set $\mathcal{S}$ is $(n,m)$-\textit{unidentifiable}, it is inherently $(n,m)$-\textit{unmarkable}, highlighting local subset unidentifiability as a stronger form of nonlocality. Here are some examples of locally unidentifiable sets.}

\begin{theorem}\label{4uid}
The set of four two-qubit Bell states is (4,2)-unidentifiable.
\end{theorem}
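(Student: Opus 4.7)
The plan is to derive a contradiction from the maximal entanglement structure of the states $|B_j\rangle_{\mathtt{A}_1\mathtt{B}_1}|B_k\rangle_{\mathtt{A}_2\mathtt{B}_2}$ combined with a dimension count on Bob's side. First, I would rephrase the task: the six possible subsets $\mathcal{S}'_{jk}=\{\ket{B_j},\ket{B_k}\}$ ($j<k$) are presented to Alice and Bob as either ordering, so effectively the state to be identified is the rank-$2$ mixture
\[
  \rho_{jk}=\tfrac{1}{2}\bigl(\oper{B_jB_k}+\oper{B_kB_j}\bigr)
\]
on $\mathbb{C}^4_{\mathtt{A}_1\mathtt{A}_2}\otimes\mathbb{C}^4_{\mathtt{B}_1\mathtt{B}_2}$. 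The six $\rho_{jk}$ have pairwise orthogonal supports, so the LSI task amounts to LOCC discrimination of these six mixed states.

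The key structural observation is that every product $\ket{B_jB_k}$ is maximally entangled in the bipartition $\mathtt{A}_1\mathtt{A}_2:\mathtt{B}_1\mathtt{B}_2$; hence $\operatorname{Tr}_{\mathtt{B}}\rho_{jk}=I_{\mathtt{A}}/4$ for every subset. In any LOCC protocol, whoever measures first (WLOG Alice) has outcome probability $p(a\mid jk)=\operatorname{Tr}[M_a^{\mathtt A}]/4$, which is completely independent of $\{j,k\}$. Thus Alice's measurement cannot by itself rule out any subset — whenever an outcome $a$ occurs, it occurs with the same nonzero probability for all six subsets.

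Next I would compute Bob's conditional state, $\sigma^{\mathtt B}_{jk}(a)\propto \operatorname{Tr}_{\mathtt A}[(K_a^{\mathtt A}\otimes I)\rho_{jk}(K_a^{\mathtt A\dagger}\otimes I)]$. Because $\rho_{jk}$ has rank $2$ and partial tracing cannot increase rank, $\sigma^{\mathtt B}_{jk}(a)$ has rank at most $2$ on the $4$-dimensional space $\mathbb{C}^4_{\mathtt{B}_1\mathtt{B}_2}$. For perfect identification Bob (possibly assisted by further rounds of LOCC) would have to distinguish these six states, which requires their supports to be pairwise orthogonal in $\mathbb{C}^4$. But six nonzero pairwise-orthogonal supports demand an ambient dimension of at least $6$, while Bob only has dimension $4$ — a contradiction in the one-round case.

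The delicate step, and the main obstacle, is pushing this from one-way LOCC to arbitrary two-way LOCC. Here I would argue inductively: the rank-$\leq 2$ property of the conditional states and the ``no-information'' property of one party's marginal are both preserved under the tree of LOCC measurements (local operations cannot increase rank, and the MES structure propagates through the conditional states by the same twirling argument). So at every round where a party is tasked with committing to a final subset label, that party faces the same impossibility: six conditional states of rank $\leq 2$ in a $4$-dimensional local space cannot have pairwise orthogonal supports, and Alice's measurements never shrink the list of candidate subsets, so the required pruning must all come from Bob, whom the dimension bottleneck blocks. If a cleaner finish is preferred, one could instead appeal to the standard impossibility of LOCC-distinguishing more than $d$ mutually orthogonal maximally entangled states in $\mathbb{C}^d\otimes\mathbb{C}^d$ (here $d=4$) and observe that the twelve product-Bell states $\ket{B_jB_k}$ ($j\neq k$) constitute such an ensemble, noting that any LOCC protocol achieving LSI would, together with the protocol of Theorem~\ref{32}, yield more information than that bound permits.
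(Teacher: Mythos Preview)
Your setup and the one-round argument are fine, but the extension to arbitrary LOCC in the fourth paragraph does not go through. The claim that ``the MES structure propagates through the conditional states'' is false: after Alice applies a Kraus operator $K_a$ that is not proportional to a unitary, the post-measurement states $(K_a\otimes I)\ket{B_jB_k}$ are no longer maximally entangled, and Bob's conditional marginal $\sigma^{\mathtt B}_{jk}(a)=U_{jk}\,(K_a^\dagger K_a)^{T}\,U_{jk}^\dagger/4$ \emph{does} depend on $\{j,k\}$. Hence Bob's outcome probabilities at the second round are subset-dependent, so the ``no-information'' property does not propagate beyond the first measurement, and different LOCC branches can in principle carry different candidate lists. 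Your dimension-count conclusion (``six supports in four dimensions'') therefore cannot be applied at the final round, because by then the protocol need not face all six subsets simultaneously on any single branch.

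Your ``cleaner finish'' has the right external input---the bound that at most $d$ maximally entangled states in $\mathbb{C}^d\otimes\mathbb{C}^d$ are LOCC-distinguishable---but the wrong glue. Theorem~\ref{32} concerns three Bell states and plays no role here, and ``yield more information'' is not a usable statement. The clean reduction (and the paper's actual argument) is one line: if the six mixed states $\rho_{jk}$ were LOCC-distinguishable, then so would be \emph{any} six pure states chosen one from each support; taking $\ket{B_jB_k}$ for $j<k$ gives six orthogonal maximally entangled states in $\mathbb{C}^4\otimes\mathbb{C}^4$, contradicting the $d=4$ bound. You already wrote down the $\rho_{jk}$ and noted that $\ket{B_jB_k}$ are MES in the $\mathtt{A}_1\mathtt{A}_2:\mathtt{B}_1\mathtt{B}_2$ cut---all you were missing was this ``distinguishable mixed states $\Rightarrow$ distinguishable supports $\Rightarrow$ distinguishable pure representatives'' step, which replaces the entire inductive LOCC analysis.
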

\begin{proof} If a subset $\mathcal{S}'_i$ containing two states from a set $\mathcal{S}$ of four two-qubit Bell states is shared among two spatially separated agents, Alice and Bob, they will {never} be able to locally distinguish between the $\mathcal{S}'_i$s, {\textit{i.e.}, know the identity of the two given Bell states, let alone their ordering}. Following is the detailed proof. 

Consider that $\mathcal{S}:=\{\ket{B_1}, \ket{B_2}, \ket{B_3}, \ket{B_4}$\}. A subset containing any two Bell states can be chosen in {${4\choose2}=6$} ways, namely, $\mathcal{S}'_1:=\{\ket{B_1},\ket{B_2}\}$, $\mathcal{S}'_2:=\{\ket{B_1},\ket{B_3}\}$, $\mathcal{S}'_3:=\{\ket{B_1},\ket{B_4}\}$, $\mathcal{S}'_4:=\{\ket{B_2},\ket{B_3}\}$, $\mathcal{S}'_5:=\{\ket{B_2},\ket{B_4}\}$, $\mathcal{S}'_6:=\{\ket{B_3},\ket{B_4}\}$. 
Now, if $\mathcal{S}'_1$ is distributed among the agents, {the shared states can exist in two distinct orders, either $\ket{B_1}_{\mathtt{A_1B_1}}\otimes\ket{B_2}_{\mathtt{A_2B_2}}$ or $\ket{B_2}_{\mathtt{A_1B_1}}\otimes\ket{B_1}_{\mathtt{A_2B_2}}$, both with equal probability, representing the same subset $\mathcal{S}'_1$. Consequently, the description of the subset according to the agents corresponds to a mixed state: }
{\small
\begin{equation*}
    \begin{split}
    \rho_1:=\tfrac{1}{2}\oper{B_1}_{\mathtt{A_1B_1}}&\otimes\oper{B_2}_{\mathtt{A_2B_2}} \\& + \tfrac{1}{2}\oper{B_2}_{\mathtt{A_1B_1}}\otimes\oper{B_1}_{\mathtt{A_2B_2}}.
    \end{split}
\end{equation*}}  
{More generally, Alice and Bob's description of  subset $\mathcal{S}'_i=\{\ket{B_{\alpha}},\ket{B_{\beta}}\}$ is given by}
{\small
\begin{equation*}
    \begin{split}
    \rho_i:=\tfrac{1}{2}\oper{B_{\alpha}}_{\mathtt{A_1B_1}}&\otimes\oper{B_{\beta}}_{\mathtt{A_2B_2}} \\& + \tfrac{1}{2}\oper{B_{\beta}}_{\mathtt{A_1B_1}}\otimes\oper{B_{\alpha}}_{\mathtt{A_2B_2}},
    \end{split}
\end{equation*}}
for $i\in\{1,\cdots,6\}$.

Clearly, the task of locally distinguishing the subsets $\{\mathcal{S}'_i\}_{i=1}^6$ is equivalent to distinguishing between the mixed states $\{\rho_i\}_{i=1}^6$ by LOCC. We now prove our claim by \textit{reductio ad absurdum}. 

We assume  the states $\{\rho_i\}_{i=1}^6$ are locally distinguishable. Consequently, their supports are also distinguishable via LOCC \cite{sb}. Hence, any set of pure states {$\{\ket{\xi_i}\left\vert\right.\ket{\xi_i}\in \text{Support}(\rho_i)\}_{i=1}^6$} must also be locally distinguishable.  However,  the set $\{\ket{B_j}_{\mathtt{A_1B_1}}\otimes\ket{B_k}_{\mathtt{A_2B_2}}\}_{j< k=1}^4$, each state of which $\in \text{Support}(\rho_i)$, is locally indistinguishable as the number of MESs exceeds their local dimension \cite{Hayashi}.  This readily contradicts our assumption that $\{\rho_i\}_{i=1}^6$ are locally distinguishable,  thus concluding our proof. 
\end{proof}

{We have also identified several sets containing orthogonal general MESs that fail to achieve the LSI task, thus demonstrating local subset unidentifiability. 
}

\begin{theorem}\label{guid}
{Consider a set $\mathcal{S}$ of $2<D\leq d^2$ MESs in $\mathbb{C}^d \otimes \mathbb{C}^d$. There are $D \choose k$ possible subsets, each containing $1<k<D$ distinct states from $\mathcal{S}$. The set $\mathcal{S}$ is ($D,k$)-unidentifiable, if ${D \choose k}>d^k$.}
\end{theorem}
The proof of this is provided in Appendix B. 

{Now, as a corollary of the above theorem, we present the following result, which is in spirit of the theorem stating that any complete basis set of MESs is locally indistinguishable \cite{Hayashi}.}
\begin{corollary}
{Any two distinct states selected from a complete basis set of MESs of arbitrary dimensions cannot be locally identified.}
\end{corollary}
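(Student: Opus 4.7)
The plan is to derive the corollary as a direct specialization of Theorem \ref{guid} with $k=2$. Since a complete basis of maximally entangled states in $\mathbb{C}^d \otimes \mathbb{C}^d$ contains exactly $D = d^2$ elements, I only need to verify that the hypothesis $\binom{D}{k} > d^k$ of Theorem \ref{guid} is satisfied in this regime.

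First I would substitute $D = d^2$ and $k = 2$ into the inequality, obtaining the condition
\begin{equation*}
\binom{d^2}{2} = \frac{d^2(d^2-1)}{2} > d^2.
\end{equation*}
Dividing both sides by $d^2$ (valid since $d \geq 2$), this simplifies to $d^2 - 1 > 2$, i.e., $d^2 > 3$. Since every meaningful bipartite entangled system has local dimension $d \geq 2$, we have $d^2 \geq 4 > 3$, so the inequality holds in all dimensions where the statement is non-trivial. Consequently, Theorem \ref{guid} applies and yields $(d^2,2)$-unidentifiability, which is precisely the content of the corollary.

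There is essentially no obstacle here: the only thing to check is the combinatorial inequality, which holds trivially for $d \geq 2$. I would simply state the reduction, display the one-line arithmetic verification, and invoke Theorem \ref{guid} to conclude. If desired, I could additionally remark that the conclusion is tight in the sense that for $k=1$ the analogous inequality $d^2 > d$ would also hold, but that case reduces to ordinary local state discrimination of a complete basis of MES, already known to be impossible by Ref.~\cite{Hayashi} -- providing a consistency check with the $k=1$ limit identified earlier in the paper.
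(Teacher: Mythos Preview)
Your proposal is correct and matches the paper's own proof almost verbatim: the paper also specializes Theorem~\ref{guid} to $k=2$, computes $\binom{d^2}{2}=\frac{d^2(d^2-1)}{2}$, and observes that this exceeds $d^2$ for every integer $d\ge 2$. Your additional remark about the $k=1$ consistency check is not in the paper but is harmless.
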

\begin{proof}
{Consider a complete basis set of $d^2$ mutually orthogonal MESs in $\mathbb{C}^{d}\otimes\mathbb{C}^{d}$. From this set, any two states can be chosen in ${d^2 \choose 2}=\frac{d^2(d^2-1)}{2}$ ways. This quantity is always greater than $d^2$ for any $d \in \mathbb{Z}^+\backslash\{1\}$. Therefore, the subset of two states is always locally unidentifiable.}
\end{proof}

{
Now, we extend our analysis to the multipartite framework. Similar extensions exist for other known forms of nonlocality, such as local indistinguishability and local irreducibility \cite{halder,rout2019genuinely,rout2021multiparty}.}

\textit{Genuine Unindetifiability.}- 
{We begin by presenting an example of set of tripartite orthogonal states demonstrating local unidentifiability in every bipartition. We show that even when \textit{all but one} agent collaborates in a single laboratory, with the freedom to perform the strongest possible joint operations on their composite subsystems, they still fail to achieve perfect LSI. We term this multipartite notion of unidentifiablity as \textit{genuine}. Furthermore, our examples also represent genuine multipartite instances of the existing concept of local unmarkability \cite{sen}.}
{Consider a set of eight $3$-qubit GHZ states, $\mathcal{S}:=\{\frac{1}{\sqrt{2}}(\ket{000}\pm\ket{111}),\frac{1}{\sqrt{2}}(\ket{001}\pm\ket{110}),\frac{1}{\sqrt{2}}(\ket{010}\pm\ket{101}),\frac{1}{\sqrt{2}}(\ket{100}\pm\ket{011})\}$.} 

\begin{theorem}\label{gen}
The set $\mathcal{S}$ is locally $(D,k)$-unidentifiable if ${D\choose k}>2^{2k}${, where
$2<D\leq 8$ and $1<k<D$,} 
even when two out of three parties { collaborate} in a same lab.  
\end{theorem}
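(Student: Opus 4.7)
The plan is to adapt the argument of Theorem~\ref{guid} to the tripartite setting. By the permutation symmetry of the GHZ basis $\mathcal{S}$ under interchange of the three qubits, it suffices to treat a single coarse-graining, say the bipartition $A\,|\,BC$; the same reasoning will apply verbatim to $B\,|\,CA$ and $C\,|\,AB$. With $B$ and $C$ merged, each $\ket{G_\alpha}$ is a Schmidt-rank-$2$ state in $\mathbb{C}^2\otimes\mathbb{C}^4$ with maximally mixed Alice-marginal, so every $k$-fold product $\bigotimes_{j=1}^k \ket{G_{\alpha_j}}$ is a generalized maximally entangled state of Schmidt rank $2^k$ in $\mathbb{C}^{2^k}\otimes\mathbb{C}^{4^k}$ with maximally mixed marginal on Alice's side.

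For each of the $\kappa := \binom{D}{k}$ subsets $\mathcal{S}'^{(D,k)}_i$, I would form the equal mixture $\rho_i^m$ over the $k!$ orderings of the $k$-fold GHZ product of its elements distributed among the parties' $k$ copies. These mixed states are mutually orthogonal (distinct subsets differ in at least one element), so the $(D,k)$-LSI task in the bipartition $A\,|\,BC$ is equivalent to LOCC-distinguishing $\{\rho_i^m\}_{i=1}^\kappa$. I then argue by contradiction: if the $\rho_i^m$ were LOCC-distinguishable, then by the support-distinguishability result of Ref.~\cite{sb}, any family $\ket{\xi_i}\in\text{Support}(\rho_i^m)$ would also be LOCC-distinguishable. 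Taking each $\ket{\xi_i}$ to be one fixed ordering of the $k$-fold GHZ product drawn from $\mathcal{S}'^{(D,k)}_i$ produces $\kappa$ orthogonal generalized maximally entangled states in $\mathbb{C}^{2^k}\otimes\mathbb{C}^{4^k}$, and invoking the Hayashi-type bound of Ref.~\cite{Hayashi} in this bipartition forces them to be LOCC-indistinguishable as soon as $\kappa$ exceeds the larger local dimension $4^k=2^{2k}$, yielding the required contradiction. Permutation symmetry of $\mathcal{S}$ in the three parties then immediately extends the conclusion to the bipartitions $B\,|\,CA$ and $C\,|\,AB$.

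The main obstacle I foresee is the clean invocation of the Hayashi-type bound in the \emph{rectangular} bipartite space $\mathbb{C}^{2^k}\otimes\mathbb{C}^{4^k}$: the statement used in the proof of Theorem~\ref{guid} was phrased for the square setting $\mathbb{C}^{d^k}\otimes\mathbb{C}^{d^k}$ with threshold $d^k$. To bridge this gap I would either (i) verify directly that $\kappa$ orthogonal generalized MES of Schmidt rank $2^k$ in $\mathbb{C}^{2^k}\otimes\mathbb{C}^{4^k}$ with maximally mixed Alice-marginal are LOCC-indistinguishable whenever $\kappa>4^k$, or (ii) isometrically embed Alice's $\mathbb{C}^{2^k}$ into $\mathbb{C}^{4^k}$ and argue that LOCC-distinguishability in the original bipartition would descend to LOCC-distinguishability of the padded states in the square space $\mathbb{C}^{4^k}\otimes\mathbb{C}^{4^k}$, where the quoted bound applies. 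Beyond this rectangular-dimension generalization, every step is a faithful transcription of the Theorem~\ref{guid} argument.
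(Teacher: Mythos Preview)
Your overall architecture---form the $\kappa$ mixed states $\rho_i^m$, invoke the support-distinguishability reduction of Ref.~\cite{sb}, pick one ordered $k$-fold GHZ product $\ket{\xi_i}$ from each support, and derive a contradiction from a dimension-type bound---is exactly what the paper does. The difference, and the place where your proposal is incomplete, is the final bound.

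The paper does \emph{not} appeal to Hayashi here. Instead it records the one-sided local-unitary relation $\ket{G_\alpha}=(\mathbb{I}_2\otimes\mathbb{U}_{\alpha\beta})\ket{G_\beta}$ with $\mathbb{U}_{\alpha\beta}$ acting on the two-qubit $BC$ side, so that the $k$-fold products inherit $\ket{\Phi^i_\mu}=(\mathbb{I}_{2^k}\otimes\mathbb{U}_{ij,\mu\nu})\ket{\Phi^j_\nu}$ in $\mathbb{C}^{2^k}\otimes\mathbb{C}^{2^{2k}}$, and then cites Nathanson's bound for ensembles related by a one-sided local unitary. That result applies directly in the rectangular space and yields $\kappa\le 2^{2k}$, closing the argument without any embedding.

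Regarding your two proposed fixes: route~(ii) fails as written. After isometrically padding Alice's $\mathbb{C}^{2^k}$ into $\mathbb{C}^{4^k}$, the states still have Schmidt rank $2^k<4^k$, so they are \emph{not} maximally entangled in $\mathbb{C}^{4^k}\otimes\mathbb{C}^{4^k}$ and the square-case Hayashi bound for MES does not apply to them. Route~(i) is the right direction, and what makes it go through is precisely the local-unitary structure the paper exploits via Nathanson (equivalently, the entropic form of the Hayashi~\emph{et~al.}\ bound, which for Schmidt-rank-$2^k$ states with uniform coefficients in $\mathbb{C}^{2^k}\otimes\mathbb{C}^{4^k}$ also gives $N\le 2^{2k}$). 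So your proof becomes complete once you replace the tentative Hayashi citation by the Nathanson result for one-sided unitarily related states; the rest of your argument then coincides with the paper's.
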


{Proof of this is given in Appendix C.} 

The scenario gets further involved with increasing number of parties. {For  instance,  when we add another spatially separated agent to the tripartite scenario, we encounter two distinct types of bi-partitions:} 1 party vs. 3 parties ($1:3$), and 2 parties vs. 2 parties ($2:2$). {Within each type, there exist bi-partitions} that are related by party permutations. Sets of orthogonal four-party states that exhibit local unidentifiability in {\it all} such possible bi-partitions {within both types} bring out the true genuineness of the nonlocality discussed here.

Consider the set $\mathcal{S}:=\{\ket{\Omega_{\alpha}}\}_{\alpha=1}^{16}$, where
{\small
\begin{align}
    \ket{\Omega_{1,2}} &= \tfrac{1}{2} (\ket{0000}\pm\ket{0111}+\ket{1010}\pm\ket{1101}),\nonumber\\
    \ket{\Omega_{3,4}} &= \tfrac{1}{2} (\ket{0000}\pm\ket{0111}-\ket{1010}\mp\ket{1101}),\nonumber\\
    \ket{\Omega_{5,6}} &= \tfrac{1}{2} (\ket{0001}\pm\ket{0110}+\ket{1011}\pm\ket{1100}),\nonumber\\
    \ket{\Omega_{7,8}} &= \tfrac{1}{2} (\ket{0001}\pm\ket{0110}-\ket{1011}\mp\ket{1100}),\nonumber\\
    \ket{\Omega_{9,10}} &= \tfrac{1}{2} (\ket{0010}\pm\ket{0101}+\ket{1000}\pm\ket{1111}),\nonumber\\
    \ket{\Omega_{11,12}} &= \tfrac{1}{2} (\ket{0010}\pm\ket{0101}-\ket{1000}\mp\ket{1111}),\nonumber\\
    \ket{\Omega_{13,14}} &= \tfrac{1}{2} (\ket{0011}\pm\ket{0100}+\ket{1001}\pm\ket{1110}),\nonumber\\
    \ket{\Omega_{15,16}} &= \tfrac{1}{2} (\ket{0011}\pm\ket{0100}-\ket{1001}\mp\ket{1110}).\nonumber
\end{align}}

\begin{theorem}\label{gen4}
The set $\mathcal{S}$ is locally $(D,k)$-unidentifiable if ${D\choose k}>2^{3k}${, where $2<D\leq16$ and $1<k<D$,} in all $1:3$ as well $2:2$ bi-partitions.
\end{theorem}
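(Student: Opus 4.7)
The plan is to mimic the proof of Theorem \ref{gen} with appropriate dimension bookkeeping for the four-party case, running a parallel argument for each $1:3$ and each $2:2$ bi-partition. First I would establish the two structural facts that drove the three-party argument: (i) each $\ket{G^{(4)}_\alpha}$ is maximally entangled across the chosen cut with uniform Schmidt coefficients, and (ii) any two states $\ket{G^{(4)}_\alpha}$ and $\ket{G^{(4)}_\beta}$ are related by a unitary acting on a single side of that cut, i.e.\ $\ket{G^{(4)}_\alpha}=(\mathbb{I}\otimes \mathbb{U}_{\alpha\beta})\ket{G^{(4)}_\beta}$. Fact (i) fixes the effective bipartite dimensions: for a $1:3$ cut each state lives in $\mathbb{C}^{2}\otimes\mathbb{C}^{8}$ with Schmidt rank $2$, while for a $2:2$ cut the relevant space is $\mathbb{C}^{4}\otimes\mathbb{C}^{4}$. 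Fact (ii) is expected to follow from the stabilizer structure of the 16 GHZ-type basis: pairs of elements differ by a single-qubit Pauli, which trivially sits on one side of any bipartition.

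Given (i) and (ii), the tensor composites $\ket{\Phi^i_\mu}$ of $k$ states drawn (in any permutation) from a subset $\mathcal{S}'_i$ inherit the maximally entangled form across the induced $1:3$ cut $\mathbb{C}^{2^k}\otimes\mathbb{C}^{2^{3k}}$ and the induced $2:2$ cut $\mathbb{C}^{2^{2k}}\otimes\mathbb{C}^{2^{2k}}$, and any two composites $\ket{\Phi^i_\mu},\ket{\Phi^j_\nu}$ are related by a one-sided unitary $\mathbb{I}\otimes \mathbb{U}_{ij,\mu\nu}$ (here one uses that permutations internal to a collaborating lab are implementable as local unitaries there). Nathanson's bound \cite{nathanson} then forbids perfect LOCC-distinguishability of more than $D$ maximally entangled states of Schmidt rank $d$ in $\mathbb{C}^{d}\otimes\mathbb{C}^{D}$; this yields indistinguishability once $\kappa=\binom{D}{k}$ exceeds $2^{3k}$ in a $1:3$ cut and $2^{2k}$ in a $2:2$ cut. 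Since $2^{3k}>2^{2k}$ for $k\ge 1$, the single hypothesis $\binom{D}{k}>2^{3k}$ covers every bi-partition. The usual support-preservation step (assume $\rho_i$s are LOCC-distinguishable, deduce the same for any pure selection from each $\text{Support}(\rho_i)$, then contradict via Nathanson) then promotes indistinguishability of the $\ket{\Phi^i_\mu}$s to indistinguishability of the mixed states $\rho_i$, delivering the claimed unidentifiability.

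The main obstacle I anticipate is actually verifying (i) and (ii) uniformly for every $2:2$ bi-partition. There are three inequivalent $2:2$ cuts; direct inspection of, say, $\ket{G^{(4)}_1}$ in the $\{1,2\}$ vs.\ $\{3,4\}$ cut gives full Schmidt rank $4$ with the clean maximally-entangled structure, but in the ``interleaved'' cut $\{1,3\}$ vs.\ $\{2,4\}$ the Schmidt rank may drop below $4$ and the Schmidt basis on each side may depend on $\alpha$. Checking that the one-sided unitary relation still holds uniformly across $\alpha$, or alternatively identifying a common supporting subspace inside $\mathbb{C}^{4}\otimes\mathbb{C}^{4}$ on which Nathanson's bound still applies, is the step I would verify most carefully --- most likely by exploiting the fact that the 16 defining computational-basis 4-tuples are cosets of a fixed $2$-dimensional binary code in $\mathbb{F}_2^{4}$, so their entanglement spectrum across every cut is controlled by the weight enumerator of that code.
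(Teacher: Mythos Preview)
Your plan is exactly the paper's: assert the one-sided unitary relation $\ket{G^{(4)}_{\alpha}}=(\mathbb{I}\otimes\mathbb{U}_{\alpha\beta})\ket{G^{(4)}_{\beta}}$ in each cut, tensor up, invoke Nathanson on the $\ket{\Phi^i_\mu}$s, and lift to the mixed $\rho_i$s via the support argument; the thresholds $2^{2k}$ for $2{:}2$ and $2^{3k}$ for $1{:}3$ and the observation that the latter dominates are precisely what the paper records. The caution you flag about the interleaved $2{:}2$ cuts is well placed --- the paper simply asserts the two-qubit one-sided relation without checking each bipartition --- so your proposed verification via the underlying $\mathbb{F}_2^4$ code structure is not extra work beyond the paper but rather the due diligence its proof leaves implicit.
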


The proof of this theorem is provided in Appendix D.
Additionally, Theorem \ref{gen} and \ref{gen4} also present examples of \textit{genuine unmarkability}. The concerned sets are locally unmarkable in all possible bipartitions. 

\textit{Application in secret password distribution.}- 
{Our notion of stronger nonlocality, Local Subset Unidentifiability, has potential significance in various information processing technologies. In this context, we focus on a specific application (see Fig. \ref{apl}).
\begin{figure}[t]
	\centering
	\includegraphics[width=3.5in]{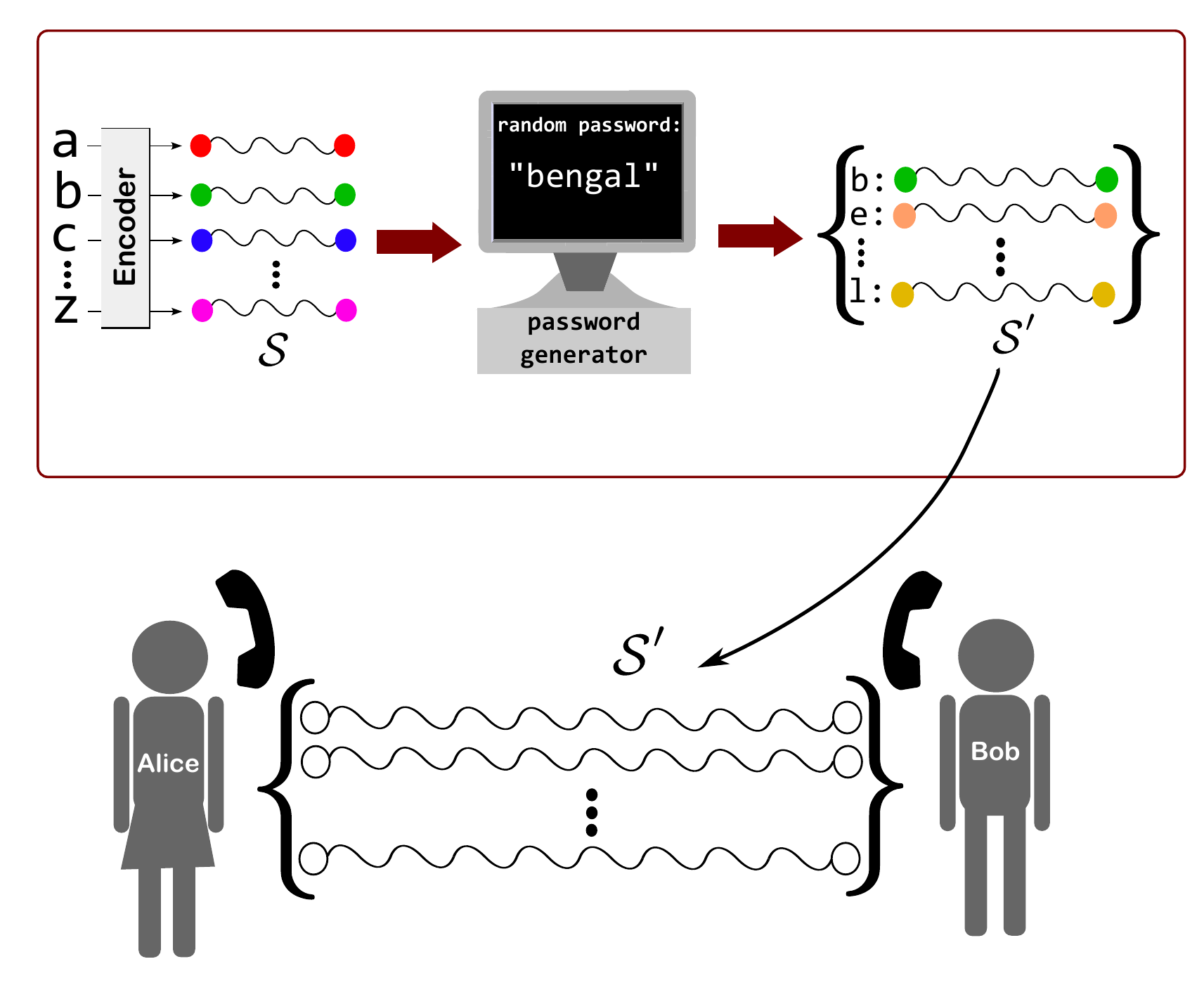}
    \caption{{
    Improving the information leakage in secret password distribution scheme. A random password-string of $m$ letters is generated from a known alphabet of $n>m$ letters, to be distributed among spatially separated parties. The sender wishes to keep the identity of the password hidden unless all the parties physically come together in a common location. If the letters of the alphabet are encoded into a set of multipartite orthogonal states such that the set is locally $(n,m)$-unidentifiable, then the spatially separated parties cannot locally reveal the hidden password (not even the letters) with certainty. This provides a tighter security condition than encoding the letters into a locally unmarkable set, where the receivers may locally predict the identity of each letter in the password with certainty.
    }   }
	\label{apl}
\end{figure}
In this scenario, a sender aims to distribute hidden information, such as a \textit{locked password}, among several spatially separated receivers who can communicate classically among themselves. The sender's objective is to ensure that the receivers remain unaware of the password's identity as long as they are separated, even if they have unrestricted classical communication. The password can only be unlocked when all the receivers physically gather in a common location.}

Precisely, consider that the sender wish to share a password -- a string $\mathcal{X}:=x_1x_2\cdots x_m$ of $m$ letters, $x_i$ being the $i$th letter in the string -- among the receivers. Each letter in the string is to be chosen without repetition from an alphabet $\mathcal{A}=\{a_k\}_{k=1}^n$ of $n>m$ letters which is known to the receivers as well.
Now, the sender and the receivers agree upon an encoding scheme: the letters of the alphabet $\mathcal{A}$ are encoded in a set $\mathcal{S}:=\{\ket{\psi_k}\}_{k=1}^n$ of pairwise orthogonal pure multipartite quantum states. Accordingly, the sender encodes their password $\mathcal{X}$ into a string of quantum states:
$\mathcal{X}\mapsto \ket{\xi_1} \otimes \ket{\xi_2} \otimes \cdots \otimes \ket{\xi_m}$, where the state $\ket{\xi_i}$ can be any state from $\mathcal{S}$ with the only restriction that $\ket{\xi_i}\neq \ket{\xi_j}, \forall i,j$. 
Subsequently, the sender shares this composite state among the receivers (see Fig. \ref{apl}). If $\mathcal{S}$ is locally ($n,m$)-unmarkable, then the spatially separated parties will not be able to perfectly discriminate the received string from the $^nP_m=\frac{n!}{(n-m)!}$ possible strings of quantum states by LOCC. However, it may so happen that the receivers can perfectly predict the identity of the $m$ individual states $\lbrace \ket{\xi_i}\rbrace$ if the encoding is done in a locally $(n,m)-$unmarkable, but locally $(n,m)-$identifiable set of quantum states. Then, they will be able to guess the correct permutation of the letters with success probability $\frac{1}{m!}$. Another alternative which they may opt for is to imperfectly discriminate (\textit{i.e.}, with a nonzero probability $P_{imp}<1$, bounded by an upper limit discussed in \cite{imperfect1}) the received string. The encoding set $\mathcal{S}$ determines which of the success probabilities is higher.

{However,} if the sender encodes the password in a locally ($n,m$)-unidentifiable set of states, then the receivers will not be able to even identify the individual letters perfectly by LOCC, and hence they will not follow the former strategy. Furthermore, if, for such sets, $P_{imp}<\frac{1}{m!}$, then the security of the password is enhanced significantly. 
We have found examples of $(n,m)-$unidentifiable sets, $\mathcal{S}_d\subset\mathbb{C}^d\otimes\mathbb{C}^d$ containing {MESs} for which $P_{imp}<\frac{1}{m!}$ (follows from Theorem \ref{guid}) as long as $|\mathcal{S}_d|\geq d+1$ \cite{imperfect1,imperfect2}.

\textit{Discussion.}- 
In summary, we come up with a new distributed task -- LSI. We show that impossibility of accomplishing this task gives birth to a unique version of quantum nonlocality -- local subset unidentifiability. We show that this is by-far the strongest quantum nonlocality in the state discrimination paradigm, that arises from the impossibility of discriminating certain mutually orthogonal subspaces of rank more than one. In the multipartite framework, we introduce the notion of genuine unidentifiability which says that a set of quantum states may remain locally unidentifiable even in all possible bi-partitions. Along this line, we also introduce the notion of genuine unmarkability in multipartite scenarios. Interestingly, we also propose a cryptographic application of this proposed nonlocality. In secret password distribution scheme, we demonstrate that local unidentifiability provides a strictly better encoding for protecting password secrecy than its predecessors. While we explore local unidentifiability only in entangled states, we believe it is not necessarily an exclusive characteristic of entanglement. Exploring the same feature in orthogonal product states would be quite intriguing.

\begin{acknowledgements}
We thank Guruprasad Kar, Somshubhro Bandyopadhyay, Sibasish Ghosh, Manik Banik, and Arup Roy for discussions. 
\end{acknowledgements}


\begin{thebibliography}{60}%
\makeatletter
\providecommand \@ifxundefined [1]{%
 \@ifx{#1\undefined}
}%
\providecommand \@ifnum [1]{%
 \ifnum #1\expandafter \@firstoftwo
 \else \expandafter \@secondoftwo
 \fi
}%
\providecommand \@ifx [1]{%
 \ifx #1\expandafter \@firstoftwo
 \else \expandafter \@secondoftwo
 \fi
}%
\providecommand \natexlab [1]{#1}%
\providecommand \enquote  [1]{``#1''}%
\providecommand \bibnamefont  [1]{#1}%
\providecommand \bibfnamefont [1]{#1}%
\providecommand \citenamefont [1]{#1}%
\providecommand \href@noop [0]{\@secondoftwo}%
\providecommand \href [0]{\begingroup \@sanitize@url \@href}%
\providecommand \@href[1]{\@@startlink{#1}\@@href}%
\providecommand \@@href[1]{\endgroup#1\@@endlink}%
\providecommand \@sanitize@url [0]{\catcode `\\12\catcode `\$12\catcode
  `\&12\catcode `\#12\catcode `\^12\catcode `\_12\catcode `\%12\relax}%
\providecommand \@@startlink[1]{}%
\providecommand \@@endlink[0]{}%
\providecommand \url  [0]{\begingroup\@sanitize@url \@url }%
\providecommand \@url [1]{\endgroup\@href {#1}{\urlprefix }}%
\providecommand \urlprefix  [0]{URL }%
\providecommand \Eprint [0]{\href }%
\providecommand \doibase [0]{http://dx.doi.org/}%
\providecommand \selectlanguage [0]{\@gobble}%
\providecommand \bibinfo  [0]{\@secondoftwo}%
\providecommand \bibfield  [0]{\@secondoftwo}%
\providecommand \translation [1]{[#1]}%
\providecommand \BibitemOpen [0]{}%
\providecommand \bibitemStop [0]{}%
\providecommand \bibitemNoStop [0]{.\EOS\space}%
\providecommand \EOS [0]{\spacefactor3000\relax}%
\providecommand \BibitemShut  [1]{\csname bibitem#1\endcsname}%
\let\auto@bib@innerbib\@empty
\bibitem [{\citenamefont {Peres}\ and\ \citenamefont {Wootters}(1991)}]{peres}%
  \BibitemOpen
  \bibfield  {author} {\bibinfo {author} {\bibfnamefont {Asher}\ \bibnamefont
  {Peres}}\ and\ \bibinfo {author} {\bibfnamefont {William~K.}\ \bibnamefont
  {Wootters}},\ }\bibfield  {title} {\enquote {\bibinfo {title} {Optimal
  detection of quantum information},}\ }\href {\doibase
  10.1103/PhysRevLett.66.1119} {\bibfield  {journal} {\bibinfo  {journal}
  {Phys. Rev. Lett.}\ }\textbf {\bibinfo {volume} {66}},\ \bibinfo {pages}
  {1119--1122} (\bibinfo {year} {1991})}\BibitemShut {NoStop}%
\bibitem [{\citenamefont {Bennett}\ \emph {et~al.}(1999)\citenamefont
  {Bennett}, \citenamefont {DiVincenzo}, \citenamefont {Fuchs}, \citenamefont
  {Mor}, \citenamefont {Rains}, \citenamefont {Shor}, \citenamefont {Smolin},\
  and\ \citenamefont {Wootters}}]{bennett1999quantum}%
  \BibitemOpen
  \bibfield  {author} {\bibinfo {author} {\bibfnamefont {Charles~H}\
  \bibnamefont {Bennett}}, \bibinfo {author} {\bibfnamefont {David~P}\
  \bibnamefont {DiVincenzo}}, \bibinfo {author} {\bibfnamefont {Christopher~A}\
  \bibnamefont {Fuchs}}, \bibinfo {author} {\bibfnamefont {Tal}\ \bibnamefont
  {Mor}}, \bibinfo {author} {\bibfnamefont {Eric}\ \bibnamefont {Rains}},
  \bibinfo {author} {\bibfnamefont {Peter~W}\ \bibnamefont {Shor}}, \bibinfo
  {author} {\bibfnamefont {John~A}\ \bibnamefont {Smolin}}, \ and\ \bibinfo
  {author} {\bibfnamefont {William~K}\ \bibnamefont {Wootters}},\ }\bibfield
  {title} {\enquote {\bibinfo {title} {Quantum nonlocality without
  entanglement},}\ }\href {\doibase 10.1103/PhysRevA.59.1070} {\bibfield
  {journal} {\bibinfo  {journal} {Physical Review A}\ }\textbf {\bibinfo
  {volume} {59}},\ \bibinfo {pages} {1070} (\bibinfo {year}
  {1999})}\BibitemShut {NoStop}%
\bibitem [{\citenamefont {Bell}(1964)}]{bell}%
  \BibitemOpen
  \bibfield  {author} {\bibinfo {author} {\bibfnamefont {J.~S.}\ \bibnamefont
  {Bell}},\ }\bibfield  {title} {\enquote {\bibinfo {title} {On the einstein
  podolsky rosen paradox},}\ }\href {\doibase
  10.1103/PhysicsPhysiqueFizika.1.195} {\bibfield  {journal} {\bibinfo
  {journal} {Physics Physique Fizika}\ }\textbf {\bibinfo {volume} {1}},\
  \bibinfo {pages} {195--200} (\bibinfo {year} {1964})}\BibitemShut {NoStop}%
\bibitem [{\citenamefont {Terhal}\ \emph {et~al.}(2001)\citenamefont {Terhal},
  \citenamefont {DiVincenzo},\ and\ \citenamefont {Leung}}]{dh1}%
  \BibitemOpen
  \bibfield  {author} {\bibinfo {author} {\bibfnamefont {Barbara~M.}\
  \bibnamefont {Terhal}}, \bibinfo {author} {\bibfnamefont {David~P.}\
  \bibnamefont {DiVincenzo}}, \ and\ \bibinfo {author} {\bibfnamefont
  {Debbie~W.}\ \bibnamefont {Leung}},\ }\bibfield  {title} {\enquote {\bibinfo
  {title} {Hiding bits in bell states},}\ }\href {\doibase
  10.1103/PhysRevLett.86.5807} {\bibfield  {journal} {\bibinfo  {journal}
  {Phys. Rev. Lett.}\ }\textbf {\bibinfo {volume} {86}},\ \bibinfo {pages}
  {5807--5810} (\bibinfo {year} {2001})}\BibitemShut {NoStop}%
\bibitem [{\citenamefont {DiVincenzo}\ \emph {et~al.}(2002)\citenamefont
  {DiVincenzo}, \citenamefont {Leung},\ and\ \citenamefont {Terhal}}]{dh2}%
  \BibitemOpen
  \bibfield  {author} {\bibinfo {author} {\bibfnamefont {D.P.}\ \bibnamefont
  {DiVincenzo}}, \bibinfo {author} {\bibfnamefont {D.W.}\ \bibnamefont
  {Leung}}, \ and\ \bibinfo {author} {\bibfnamefont {B.M.}\ \bibnamefont
  {Terhal}},\ }\bibfield  {title} {\enquote {\bibinfo {title} {Quantum data
  hiding},}\ }\href {\doibase 10.1109/18.985948} {\bibfield  {journal}
  {\bibinfo  {journal} {IEEE Transactions on Information Theory}\ }\textbf
  {\bibinfo {volume} {48}},\ \bibinfo {pages} {580--598} (\bibinfo {year}
  {2002})}\BibitemShut {NoStop}%
\bibitem [{\citenamefont {Eggeling}\ and\ \citenamefont {Werner}(2002)}]{dh3}%
  \BibitemOpen
  \bibfield  {author} {\bibinfo {author} {\bibfnamefont {T.}~\bibnamefont
  {Eggeling}}\ and\ \bibinfo {author} {\bibfnamefont {R.~F.}\ \bibnamefont
  {Werner}},\ }\bibfield  {title} {\enquote {\bibinfo {title} {Hiding classical
  data in multipartite quantum states},}\ }\href {\doibase
  10.1103/PhysRevLett.89.097905} {\bibfield  {journal} {\bibinfo  {journal}
  {Phys. Rev. Lett.}\ }\textbf {\bibinfo {volume} {89}},\ \bibinfo {pages}
  {097905} (\bibinfo {year} {2002})}\BibitemShut {NoStop}%
\bibitem [{\citenamefont {Matthews}\ \emph {et~al.}(2009)\citenamefont
  {Matthews}, \citenamefont {Wehner},\ and\ \citenamefont {Winter}}]{dh4}%
  \BibitemOpen
  \bibfield  {author} {\bibinfo {author} {\bibfnamefont {William}\ \bibnamefont
  {Matthews}}, \bibinfo {author} {\bibfnamefont {Stephanie}\ \bibnamefont
  {Wehner}}, \ and\ \bibinfo {author} {\bibfnamefont {Andreas}\ \bibnamefont
  {Winter}},\ }\bibfield  {title} {\enquote {\bibinfo {title}
  {Distinguishability of quantum states under restricted families of
  measurements with an application to quantum data hiding},}\ }\href {\doibase
  10.1007/s00220-009-0890-5} {\bibfield  {journal} {\bibinfo  {journal}
  {Communications in Mathematical Physics}\ }\textbf {\bibinfo {volume}
  {291}},\ \bibinfo {pages} {813--843} (\bibinfo {year} {2009})}\BibitemShut
  {NoStop}%
\bibitem [{\citenamefont {Markham}\ and\ \citenamefont {Sanders}(2008)}]{ss}%
  \BibitemOpen
  \bibfield  {author} {\bibinfo {author} {\bibfnamefont {Damian}\ \bibnamefont
  {Markham}}\ and\ \bibinfo {author} {\bibfnamefont {Barry~C.}\ \bibnamefont
  {Sanders}},\ }\bibfield  {title} {\enquote {\bibinfo {title} {Graph states
  for quantum secret sharing},}\ }\href {\doibase 10.1103/PhysRevA.78.042309}
  {\bibfield  {journal} {\bibinfo  {journal} {Phys. Rev. A}\ }\textbf {\bibinfo
  {volume} {78}},\ \bibinfo {pages} {042309} (\bibinfo {year}
  {2008})}\BibitemShut {NoStop}%
\bibitem [{\citenamefont {Virmani}\ \emph {et~al.}(2001)\citenamefont
  {Virmani}, \citenamefont {Sacchi}, \citenamefont {Plenio},\ and\
  \citenamefont {Markham}}]{lid1}%
  \BibitemOpen
  \bibfield  {author} {\bibinfo {author} {\bibfnamefont {S}~\bibnamefont
  {Virmani}}, \bibinfo {author} {\bibfnamefont {M~F}\ \bibnamefont {Sacchi}},
  \bibinfo {author} {\bibfnamefont {M~B}\ \bibnamefont {Plenio}}, \ and\
  \bibinfo {author} {\bibfnamefont {D}~\bibnamefont {Markham}},\ }\bibfield
  {title} {\enquote {\bibinfo {title} {Optimal local discrimination of two
  multipartite pure states},}\ }\href@noop {} {\bibfield  {journal} {\bibinfo
  {journal} {Phys. Lett. A}\ }\textbf {\bibinfo {volume} {288}},\ \bibinfo
  {pages} {62--68} (\bibinfo {year} {2001})}\BibitemShut {NoStop}%
\bibitem [{\citenamefont {Groisman}\ and\ \citenamefont
  {Vaidman}(2001)}]{lid2}%
  \BibitemOpen
  \bibfield  {author} {\bibinfo {author} {\bibfnamefont {Berry}\ \bibnamefont
  {Groisman}}\ and\ \bibinfo {author} {\bibfnamefont {Lev}\ \bibnamefont
  {Vaidman}},\ }\bibfield  {title} {\enquote {\bibinfo {title} {Nonlocal
  variables with product-state eigenstates},}\ }\href@noop {} {\bibfield
  {journal} {\bibinfo  {journal} {J. Phys. A Math. Gen.}\ }\textbf {\bibinfo
  {volume} {34}},\ \bibinfo {pages} {6881--6889} (\bibinfo {year}
  {2001})}\BibitemShut {NoStop}%
\bibitem [{\citenamefont {Walgate}\ \emph {et~al.}(2000)\citenamefont
  {Walgate}, \citenamefont {Short}, \citenamefont {Hardy},\ and\ \citenamefont
  {Vedral}}]{lid3}%
  \BibitemOpen
  \bibfield  {author} {\bibinfo {author} {\bibfnamefont {Jonathan}\
  \bibnamefont {Walgate}}, \bibinfo {author} {\bibfnamefont {Anthony~J.}\
  \bibnamefont {Short}}, \bibinfo {author} {\bibfnamefont {Lucien}\
  \bibnamefont {Hardy}}, \ and\ \bibinfo {author} {\bibfnamefont {Vlatko}\
  \bibnamefont {Vedral}},\ }\bibfield  {title} {\enquote {\bibinfo {title}
  {Local distinguishability of multipartite orthogonal quantum states},}\
  }\href {\doibase 10.1103/PhysRevLett.85.4972} {\bibfield  {journal} {\bibinfo
   {journal} {Phys. Rev. Lett.}\ }\textbf {\bibinfo {volume} {85}},\ \bibinfo
  {pages} {4972--4975} (\bibinfo {year} {2000})}\BibitemShut {NoStop}%
\bibitem [{\citenamefont {Horodecki}\ \emph {et~al.}(2003)\citenamefont
  {Horodecki}, \citenamefont {Sen(De)}, \citenamefont {Sen},\ and\
  \citenamefont {Horodecki}}]{lid4}%
  \BibitemOpen
  \bibfield  {author} {\bibinfo {author} {\bibfnamefont {Micha\l{}}\
  \bibnamefont {Horodecki}}, \bibinfo {author} {\bibfnamefont {Aditi}\
  \bibnamefont {Sen(De)}}, \bibinfo {author} {\bibfnamefont {Ujjwal}\
  \bibnamefont {Sen}}, \ and\ \bibinfo {author} {\bibfnamefont {Karol}\
  \bibnamefont {Horodecki}},\ }\bibfield  {title} {\enquote {\bibinfo {title}
  {Local indistinguishability: More nonlocality with less entanglement},}\
  }\href {\doibase 10.1103/PhysRevLett.90.047902} {\bibfield  {journal}
  {\bibinfo  {journal} {Phys. Rev. Lett.}\ }\textbf {\bibinfo {volume} {90}},\
  \bibinfo {pages} {047902} (\bibinfo {year} {2003})}\BibitemShut {NoStop}%
\bibitem [{\citenamefont {Fan}(2004)}]{lid5}%
  \BibitemOpen
  \bibfield  {author} {\bibinfo {author} {\bibfnamefont {Heng}\ \bibnamefont
  {Fan}},\ }\bibfield  {title} {\enquote {\bibinfo {title} {Distinguishability
  and indistinguishability by local operations and classical communication},}\
  }\href {\doibase 10.1103/PhysRevLett.92.177905} {\bibfield  {journal}
  {\bibinfo  {journal} {Phys. Rev. Lett.}\ }\textbf {\bibinfo {volume} {92}},\
  \bibinfo {pages} {177905} (\bibinfo {year} {2004})}\BibitemShut {NoStop}%
\bibitem [{\citenamefont {Ghosh}\ \emph {et~al.}(2004)\citenamefont {Ghosh},
  \citenamefont {Kar}, \citenamefont {Roy},\ and\ \citenamefont
  {Sarkar}}]{lid6}%
  \BibitemOpen
  \bibfield  {author} {\bibinfo {author} {\bibfnamefont {Sibasish}\
  \bibnamefont {Ghosh}}, \bibinfo {author} {\bibfnamefont {Guruprasad}\
  \bibnamefont {Kar}}, \bibinfo {author} {\bibfnamefont {Anirban}\ \bibnamefont
  {Roy}}, \ and\ \bibinfo {author} {\bibfnamefont {Debasis}\ \bibnamefont
  {Sarkar}},\ }\bibfield  {title} {\enquote {\bibinfo {title}
  {Distinguishability of maximally entangled states},}\ }\href {\doibase
  10.1103/PhysRevA.70.022304} {\bibfield  {journal} {\bibinfo  {journal} {Phys.
  Rev. A}\ }\textbf {\bibinfo {volume} {70}},\ \bibinfo {pages} {022304}
  (\bibinfo {year} {2004})}\BibitemShut {NoStop}%
\bibitem [{\citenamefont {De~Rinaldis}(2004)}]{lid7}%
  \BibitemOpen
  \bibfield  {author} {\bibinfo {author} {\bibfnamefont {S.}~\bibnamefont
  {De~Rinaldis}},\ }\bibfield  {title} {\enquote {\bibinfo {title}
  {Distinguishability of complete and unextendible product bases},}\ }\href
  {\doibase 10.1103/PhysRevA.70.022309} {\bibfield  {journal} {\bibinfo
  {journal} {Phys. Rev. A}\ }\textbf {\bibinfo {volume} {70}},\ \bibinfo
  {pages} {022309} (\bibinfo {year} {2004})}\BibitemShut {NoStop}%
\bibitem [{\citenamefont {Watrous}(2005)}]{lid10}%
  \BibitemOpen
  \bibfield  {author} {\bibinfo {author} {\bibfnamefont {John}\ \bibnamefont
  {Watrous}},\ }\bibfield  {title} {\enquote {\bibinfo {title} {Bipartite
  subspaces having no bases distinguishable by local operations and classical
  communication},}\ }\href {\doibase 10.1103/PhysRevLett.95.080505} {\bibfield
  {journal} {\bibinfo  {journal} {Phys. Rev. Lett.}\ }\textbf {\bibinfo
  {volume} {95}},\ \bibinfo {pages} {080505} (\bibinfo {year}
  {2005})}\BibitemShut {NoStop}%
\bibitem [{\citenamefont {Niset}\ and\ \citenamefont {Cerf}(2006)}]{lid11}%
  \BibitemOpen
  \bibfield  {author} {\bibinfo {author} {\bibfnamefont {J.}~\bibnamefont
  {Niset}}\ and\ \bibinfo {author} {\bibfnamefont {N.~J.}\ \bibnamefont
  {Cerf}},\ }\bibfield  {title} {\enquote {\bibinfo {title} {Multipartite
  nonlocality without entanglement in many dimensions},}\ }\href {\doibase
  10.1103/PhysRevA.74.052103} {\bibfield  {journal} {\bibinfo  {journal} {Phys.
  Rev. A}\ }\textbf {\bibinfo {volume} {74}},\ \bibinfo {pages} {052103}
  (\bibinfo {year} {2006})}\BibitemShut {NoStop}%
\bibitem [{\citenamefont {Ye}\ \emph {et~al.}(2007)\citenamefont {Ye},
  \citenamefont {Jiang}, \citenamefont {Chen}, \citenamefont {Zhang},
  \citenamefont {Zhou},\ and\ \citenamefont {Guo}}]{lid12}%
  \BibitemOpen
  \bibfield  {author} {\bibinfo {author} {\bibfnamefont {Ming-Yong}\
  \bibnamefont {Ye}}, \bibinfo {author} {\bibfnamefont {Wei}\ \bibnamefont
  {Jiang}}, \bibinfo {author} {\bibfnamefont {Ping-Xing}\ \bibnamefont {Chen}},
  \bibinfo {author} {\bibfnamefont {Yong-Sheng}\ \bibnamefont {Zhang}},
  \bibinfo {author} {\bibfnamefont {Zheng-Wei}\ \bibnamefont {Zhou}}, \ and\
  \bibinfo {author} {\bibfnamefont {Guang-Can}\ \bibnamefont {Guo}},\
  }\bibfield  {title} {\enquote {\bibinfo {title} {Local distinguishability of
  orthogonal quantum states and generators of su($\mathit{N}$)},}\ }\href
  {\doibase 10.1103/PhysRevA.76.032329} {\bibfield  {journal} {\bibinfo
  {journal} {Phys. Rev. A}\ }\textbf {\bibinfo {volume} {76}},\ \bibinfo
  {pages} {032329} (\bibinfo {year} {2007})}\BibitemShut {NoStop}%
\bibitem [{\citenamefont {Fan}(2007)}]{lid13}%
  \BibitemOpen
  \bibfield  {author} {\bibinfo {author} {\bibfnamefont {Heng}\ \bibnamefont
  {Fan}},\ }\bibfield  {title} {\enquote {\bibinfo {title} {Distinguishing
  bipartite states by local operations and classical communication},}\ }\href
  {\doibase 10.1103/PhysRevA.75.014305} {\bibfield  {journal} {\bibinfo
  {journal} {Phys. Rev. A}\ }\textbf {\bibinfo {volume} {75}},\ \bibinfo
  {pages} {014305} (\bibinfo {year} {2007})}\BibitemShut {NoStop}%
\bibitem [{\citenamefont {Duan}\ \emph {et~al.}(2007)\citenamefont {Duan},
  \citenamefont {Feng}, \citenamefont {Ji},\ and\ \citenamefont
  {Ying}}]{lid14}%
  \BibitemOpen
  \bibfield  {author} {\bibinfo {author} {\bibfnamefont {Runyao}\ \bibnamefont
  {Duan}}, \bibinfo {author} {\bibfnamefont {Yuan}\ \bibnamefont {Feng}},
  \bibinfo {author} {\bibfnamefont {Zhengfeng}\ \bibnamefont {Ji}}, \ and\
  \bibinfo {author} {\bibfnamefont {Mingsheng}\ \bibnamefont {Ying}},\
  }\bibfield  {title} {\enquote {\bibinfo {title} {Distinguishing arbitrary
  multipartite basis unambiguously using local operations and classical
  communication},}\ }\href {\doibase 10.1103/PhysRevLett.98.230502} {\bibfield
  {journal} {\bibinfo  {journal} {Phys. Rev. Lett.}\ }\textbf {\bibinfo
  {volume} {98}},\ \bibinfo {pages} {230502} (\bibinfo {year}
  {2007})}\BibitemShut {NoStop}%
\bibitem [{\citenamefont {Bandyopadhyay}\ and\ \citenamefont
  {Walgate}(2009)}]{lid15}%
  \BibitemOpen
  \bibfield  {author} {\bibinfo {author} {\bibfnamefont {Somshubhro}\
  \bibnamefont {Bandyopadhyay}}\ and\ \bibinfo {author} {\bibfnamefont
  {Jonathan}\ \bibnamefont {Walgate}},\ }\bibfield  {title} {\enquote {\bibinfo
  {title} {Local distinguishability of any three quantum states},}\ }\href@noop
  {} {\bibfield  {journal} {\bibinfo  {journal} {J. Phys. A Math. Theor.}\
  }\textbf {\bibinfo {volume} {42}},\ \bibinfo {pages} {072002} (\bibinfo
  {year} {2009})}\BibitemShut {NoStop}%
\bibitem [{\citenamefont {Feng}\ and\ \citenamefont {Shi}(2009)}]{lid16}%
  \BibitemOpen
  \bibfield  {author} {\bibinfo {author} {\bibfnamefont {Yuan}\ \bibnamefont
  {Feng}}\ and\ \bibinfo {author} {\bibfnamefont {Yaoyun}\ \bibnamefont
  {Shi}},\ }\bibfield  {title} {\enquote {\bibinfo {title} {Characterizing
  locally indistinguishable orthogonal product states},}\ }\href@noop {}
  {\bibfield  {journal} {\bibinfo  {journal} {IEEE Trans. Inf. Theory}\
  }\textbf {\bibinfo {volume} {55}},\ \bibinfo {pages} {2799--2806} (\bibinfo
  {year} {2009})}\BibitemShut {NoStop}%
\bibitem [{\citenamefont {Duan}\ \emph {et~al.}(2010)\citenamefont {Duan},
  \citenamefont {Xin},\ and\ \citenamefont {Ying}}]{lid17}%
  \BibitemOpen
  \bibfield  {author} {\bibinfo {author} {\bibfnamefont {Runyao}\ \bibnamefont
  {Duan}}, \bibinfo {author} {\bibfnamefont {Yu}~\bibnamefont {Xin}}, \ and\
  \bibinfo {author} {\bibfnamefont {Mingsheng}\ \bibnamefont {Ying}},\
  }\bibfield  {title} {\enquote {\bibinfo {title} {Locally indistinguishable
  subspaces spanned by three-qubit unextendible product bases},}\ }\href
  {\doibase 10.1103/PhysRevA.81.032329} {\bibfield  {journal} {\bibinfo
  {journal} {Phys. Rev. A}\ }\textbf {\bibinfo {volume} {81}},\ \bibinfo
  {pages} {032329} (\bibinfo {year} {2010})}\BibitemShut {NoStop}%
\bibitem [{\citenamefont {Yu}\ \emph {et~al.}(2012)\citenamefont {Yu},
  \citenamefont {Duan},\ and\ \citenamefont {Ying}}]{lid18}%
  \BibitemOpen
  \bibfield  {author} {\bibinfo {author} {\bibfnamefont {Nengkun}\ \bibnamefont
  {Yu}}, \bibinfo {author} {\bibfnamefont {Runyao}\ \bibnamefont {Duan}}, \
  and\ \bibinfo {author} {\bibfnamefont {Mingsheng}\ \bibnamefont {Ying}},\
  }\bibfield  {title} {\enquote {\bibinfo {title} {Four locally
  indistinguishable ququad-ququad orthogonal maximally entangled states},}\
  }\href {\doibase 10.1103/PhysRevLett.109.020506} {\bibfield  {journal}
  {\bibinfo  {journal} {Phys. Rev. Lett.}\ }\textbf {\bibinfo {volume} {109}},\
  \bibinfo {pages} {020506} (\bibinfo {year} {2012})}\BibitemShut {NoStop}%
\bibitem [{\citenamefont {Yang}\ \emph {et~al.}(2013)\citenamefont {Yang},
  \citenamefont {Gao}, \citenamefont {Tian}, \citenamefont {Cao},\ and\
  \citenamefont {Wen}}]{lid19}%
  \BibitemOpen
  \bibfield  {author} {\bibinfo {author} {\bibfnamefont {Ying-Hui}\
  \bibnamefont {Yang}}, \bibinfo {author} {\bibfnamefont {Fei}\ \bibnamefont
  {Gao}}, \bibinfo {author} {\bibfnamefont {Guo-Jing}\ \bibnamefont {Tian}},
  \bibinfo {author} {\bibfnamefont {Tian-Qing}\ \bibnamefont {Cao}}, \ and\
  \bibinfo {author} {\bibfnamefont {Qiao-Yan}\ \bibnamefont {Wen}},\ }\bibfield
   {title} {\enquote {\bibinfo {title} {Local distinguishability of orthogonal
  quantum states in a $2\ensuremath{\bigotimes}2\ensuremath{\bigotimes}2$
  system},}\ }\href {\doibase 10.1103/PhysRevA.88.024301} {\bibfield  {journal}
  {\bibinfo  {journal} {Phys. Rev. A}\ }\textbf {\bibinfo {volume} {88}},\
  \bibinfo {pages} {024301} (\bibinfo {year} {2013})}\BibitemShut {NoStop}%
\bibitem [{\citenamefont {Childs}\ \emph {et~al.}(2013)\citenamefont {Childs},
  \citenamefont {Leung}, \citenamefont {Man{\v c}inska},\ and\ \citenamefont
  {Ozols}}]{lid20}%
  \BibitemOpen
  \bibfield  {author} {\bibinfo {author} {\bibfnamefont {Andrew~M}\
  \bibnamefont {Childs}}, \bibinfo {author} {\bibfnamefont {Debbie}\
  \bibnamefont {Leung}}, \bibinfo {author} {\bibfnamefont {Laura}\ \bibnamefont
  {Man{\v c}inska}}, \ and\ \bibinfo {author} {\bibfnamefont {Maris}\
  \bibnamefont {Ozols}},\ }\bibfield  {title} {\enquote {\bibinfo {title} {A
  framework for bounding nonlocality of state discrimination},}\ }\href@noop {}
  {\bibfield  {journal} {\bibinfo  {journal} {Commun. Math. Phys.}\ }\textbf
  {\bibinfo {volume} {323}},\ \bibinfo {pages} {1121--1153} (\bibinfo {year}
  {2013})}\BibitemShut {NoStop}%
\bibitem [{\citenamefont {Zhang}\ \emph {et~al.}(2014)\citenamefont {Zhang},
  \citenamefont {Gao}, \citenamefont {Tian}, \citenamefont {Cao},\ and\
  \citenamefont {Wen}}]{lid21}%
  \BibitemOpen
  \bibfield  {author} {\bibinfo {author} {\bibfnamefont {Zhi-Chao}\
  \bibnamefont {Zhang}}, \bibinfo {author} {\bibfnamefont {Fei}\ \bibnamefont
  {Gao}}, \bibinfo {author} {\bibfnamefont {Guo-Jing}\ \bibnamefont {Tian}},
  \bibinfo {author} {\bibfnamefont {Tian-Qing}\ \bibnamefont {Cao}}, \ and\
  \bibinfo {author} {\bibfnamefont {Qiao-Yan}\ \bibnamefont {Wen}},\ }\bibfield
   {title} {\enquote {\bibinfo {title} {Nonlocality of orthogonal product basis
  quantum states},}\ }\href {\doibase 10.1103/PhysRevA.90.022313} {\bibfield
  {journal} {\bibinfo  {journal} {Phys. Rev. A}\ }\textbf {\bibinfo {volume}
  {90}},\ \bibinfo {pages} {022313} (\bibinfo {year} {2014})}\BibitemShut
  {NoStop}%
\bibitem [{\citenamefont {Yu}\ and\ \citenamefont {Oh}(2015)}]{lid22}%
  \BibitemOpen
  \bibfield  {author} {\bibinfo {author} {\bibfnamefont {Sixia}\ \bibnamefont
  {Yu}}\ and\ \bibinfo {author} {\bibfnamefont {C~H}\ \bibnamefont {Oh}},\
  }\href@noop {} {\enquote {\bibinfo {title} {Detecting the local
  indistinguishability of maximally entangled states},}\ } (\bibinfo {year}
  {2015}),\ \Eprint {http://arxiv.org/abs/1502.01274} {arXiv:1502.01274
  [quant-ph]} \BibitemShut {NoStop}%
\bibitem [{\citenamefont {Zhang}\ \emph {et~al.}(2015)\citenamefont {Zhang},
  \citenamefont {Gao}, \citenamefont {Qin}, \citenamefont {Yang},\ and\
  \citenamefont {Wen}}]{lid23}%
  \BibitemOpen
  \bibfield  {author} {\bibinfo {author} {\bibfnamefont {Zhi-Chao}\
  \bibnamefont {Zhang}}, \bibinfo {author} {\bibfnamefont {Fei}\ \bibnamefont
  {Gao}}, \bibinfo {author} {\bibfnamefont {Su-Juan}\ \bibnamefont {Qin}},
  \bibinfo {author} {\bibfnamefont {Ying-Hui}\ \bibnamefont {Yang}}, \ and\
  \bibinfo {author} {\bibfnamefont {Qiao-Yan}\ \bibnamefont {Wen}},\ }\bibfield
   {title} {\enquote {\bibinfo {title} {Nonlocality of orthogonal product
  states},}\ }\href {\doibase 10.1103/PhysRevA.92.012332} {\bibfield  {journal}
  {\bibinfo  {journal} {Phys. Rev. A}\ }\textbf {\bibinfo {volume} {92}},\
  \bibinfo {pages} {012332} (\bibinfo {year} {2015})}\BibitemShut {NoStop}%
\bibitem [{\citenamefont {Wang}\ \emph {et~al.}(2015)\citenamefont {Wang},
  \citenamefont {Li}, \citenamefont {Zheng},\ and\ \citenamefont
  {Fei}}]{lid24}%
  \BibitemOpen
  \bibfield  {author} {\bibinfo {author} {\bibfnamefont {Yan-Ling}\
  \bibnamefont {Wang}}, \bibinfo {author} {\bibfnamefont {Mao-Sheng}\
  \bibnamefont {Li}}, \bibinfo {author} {\bibfnamefont {Zhu-Jun}\ \bibnamefont
  {Zheng}}, \ and\ \bibinfo {author} {\bibfnamefont {Shao-Ming}\ \bibnamefont
  {Fei}},\ }\bibfield  {title} {\enquote {\bibinfo {title} {Nonlocality of
  orthogonal product-basis quantum states},}\ }\href {\doibase
  10.1103/PhysRevA.92.032313} {\bibfield  {journal} {\bibinfo  {journal} {Phys.
  Rev. A}\ }\textbf {\bibinfo {volume} {92}},\ \bibinfo {pages} {032313}
  (\bibinfo {year} {2015})}\BibitemShut {NoStop}%
\bibitem [{\citenamefont {Chen}\ and\ \citenamefont {Johnston}(2015)}]{lid25}%
  \BibitemOpen
  \bibfield  {author} {\bibinfo {author} {\bibfnamefont {Jianxin}\ \bibnamefont
  {Chen}}\ and\ \bibinfo {author} {\bibfnamefont {Nathaniel}\ \bibnamefont
  {Johnston}},\ }\bibfield  {title} {\enquote {\bibinfo {title} {The minimum
  size of unextendible product bases in the bipartite case (and some
  multipartite cases)},}\ }\href@noop {} {\bibfield  {journal} {\bibinfo
  {journal} {Commun. Math. Phys.}\ }\textbf {\bibinfo {volume} {333}},\
  \bibinfo {pages} {351--365} (\bibinfo {year} {2015})}\BibitemShut {NoStop}%
\bibitem [{\citenamefont {Yang}\ \emph {et~al.}(2015)\citenamefont {Yang},
  \citenamefont {Gao}, \citenamefont {Xu}, \citenamefont {Zuo}, \citenamefont
  {Zhang},\ and\ \citenamefont {Wen}}]{lid26}%
  \BibitemOpen
  \bibfield  {author} {\bibinfo {author} {\bibfnamefont {Ying-Hui}\
  \bibnamefont {Yang}}, \bibinfo {author} {\bibfnamefont {Fei}\ \bibnamefont
  {Gao}}, \bibinfo {author} {\bibfnamefont {Guang-Bao}\ \bibnamefont {Xu}},
  \bibinfo {author} {\bibfnamefont {Hui-Juan}\ \bibnamefont {Zuo}}, \bibinfo
  {author} {\bibfnamefont {Zhi-Chao}\ \bibnamefont {Zhang}}, \ and\ \bibinfo
  {author} {\bibfnamefont {Qiao-Yan}\ \bibnamefont {Wen}},\ }\bibfield  {title}
  {\enquote {\bibinfo {title} {Characterizing unextendible product bases in
  qutrit-ququad system},}\ }\href@noop {} {\bibfield  {journal} {\bibinfo
  {journal} {Sci. Rep.}\ }\textbf {\bibinfo {volume} {5}},\ \bibinfo {pages}
  {11963} (\bibinfo {year} {2015})}\BibitemShut {NoStop}%
\bibitem [{\citenamefont {Zhang}\ \emph
  {et~al.}(2016{\natexlab{a}})\citenamefont {Zhang}, \citenamefont {Gao},
  \citenamefont {Cao}, \citenamefont {Qin},\ and\ \citenamefont {Wen}}]{lid28}%
  \BibitemOpen
  \bibfield  {author} {\bibinfo {author} {\bibfnamefont {Zhi-Chao}\
  \bibnamefont {Zhang}}, \bibinfo {author} {\bibfnamefont {Fei}\ \bibnamefont
  {Gao}}, \bibinfo {author} {\bibfnamefont {Ya}~\bibnamefont {Cao}}, \bibinfo
  {author} {\bibfnamefont {Su-Juan}\ \bibnamefont {Qin}}, \ and\ \bibinfo
  {author} {\bibfnamefont {Qiao-Yan}\ \bibnamefont {Wen}},\ }\bibfield  {title}
  {\enquote {\bibinfo {title} {Local indistinguishability of orthogonal product
  states},}\ }\href {\doibase 10.1103/PhysRevA.93.012314} {\bibfield  {journal}
  {\bibinfo  {journal} {Phys. Rev. A}\ }\textbf {\bibinfo {volume} {93}},\
  \bibinfo {pages} {012314} (\bibinfo {year} {2016}{\natexlab{a}})}\BibitemShut
  {NoStop}%
\bibitem [{\citenamefont {Xu}\ \emph {et~al.}(2016{\natexlab{a}})\citenamefont
  {Xu}, \citenamefont {Wen}, \citenamefont {Qin}, \citenamefont {Yang},\ and\
  \citenamefont {Gao}}]{lid29}%
  \BibitemOpen
  \bibfield  {author} {\bibinfo {author} {\bibfnamefont {Guang-Bao}\
  \bibnamefont {Xu}}, \bibinfo {author} {\bibfnamefont {Qiao-Yan}\ \bibnamefont
  {Wen}}, \bibinfo {author} {\bibfnamefont {Su-Juan}\ \bibnamefont {Qin}},
  \bibinfo {author} {\bibfnamefont {Ying-Hui}\ \bibnamefont {Yang}}, \ and\
  \bibinfo {author} {\bibfnamefont {Fei}\ \bibnamefont {Gao}},\ }\bibfield
  {title} {\enquote {\bibinfo {title} {Quantum nonlocality of multipartite
  orthogonal product states},}\ }\href {\doibase 10.1103/PhysRevA.93.032341}
  {\bibfield  {journal} {\bibinfo  {journal} {Phys. Rev. A}\ }\textbf {\bibinfo
  {volume} {93}},\ \bibinfo {pages} {032341} (\bibinfo {year}
  {2016}{\natexlab{a}})}\BibitemShut {NoStop}%
\bibitem [{\citenamefont {Zhang}\ \emph
  {et~al.}(2016{\natexlab{b}})\citenamefont {Zhang}, \citenamefont {Tan},
  \citenamefont {Weng},\ and\ \citenamefont {Li}}]{lid30}%
  \BibitemOpen
  \bibfield  {author} {\bibinfo {author} {\bibfnamefont {Xiaoqian}\
  \bibnamefont {Zhang}}, \bibinfo {author} {\bibfnamefont {Xiaoqing}\
  \bibnamefont {Tan}}, \bibinfo {author} {\bibfnamefont {Jian}\ \bibnamefont
  {Weng}}, \ and\ \bibinfo {author} {\bibfnamefont {Yongjun}\ \bibnamefont
  {Li}},\ }\bibfield  {title} {\enquote {\bibinfo {title} {{LOCC}
  indistinguishable orthogonal product quantum states},}\ }\href@noop {}
  {\bibfield  {journal} {\bibinfo  {journal} {Sci. Rep.}\ }\textbf {\bibinfo
  {volume} {6}} (\bibinfo {year} {2016}{\natexlab{b}})}\BibitemShut {NoStop}%
\bibitem [{\citenamefont {Xu}\ \emph {et~al.}(2016{\natexlab{b}})\citenamefont
  {Xu}, \citenamefont {Yang}, \citenamefont {Wen}, \citenamefont {Qin},\ and\
  \citenamefont {Gao}}]{lid31}%
  \BibitemOpen
  \bibfield  {author} {\bibinfo {author} {\bibfnamefont {Guang-Bao}\
  \bibnamefont {Xu}}, \bibinfo {author} {\bibfnamefont {Ying-Hui}\ \bibnamefont
  {Yang}}, \bibinfo {author} {\bibfnamefont {Qiao-Yan}\ \bibnamefont {Wen}},
  \bibinfo {author} {\bibfnamefont {Su-Juan}\ \bibnamefont {Qin}}, \ and\
  \bibinfo {author} {\bibfnamefont {Fei}\ \bibnamefont {Gao}},\ }\bibfield
  {title} {\enquote {\bibinfo {title} {Locally indistinguishable orthogonal
  product bases in arbitrary bipartite quantum system},}\ }\href@noop {}
  {\bibfield  {journal} {\bibinfo  {journal} {Sci. Rep.}\ }\textbf {\bibinfo
  {volume} {6}} (\bibinfo {year} {2016}{\natexlab{b}})}\BibitemShut {NoStop}%
\bibitem [{\citenamefont {Wang}\ \emph
  {et~al.}(2017{\natexlab{a}})\citenamefont {Wang}, \citenamefont {Li},
  \citenamefont {Fei},\ and\ \citenamefont {Zheng}}]{lid32}%
  \BibitemOpen
  \bibfield  {author} {\bibinfo {author} {\bibfnamefont {Yan-Ling}\
  \bibnamefont {Wang}}, \bibinfo {author} {\bibfnamefont {Mao-Sheng}\
  \bibnamefont {Li}}, \bibinfo {author} {\bibfnamefont {Shao-Ming}\
  \bibnamefont {Fei}}, \ and\ \bibinfo {author} {\bibfnamefont {Zhu-Jun}\
  \bibnamefont {Zheng}},\ }\href@noop {} {\enquote {\bibinfo {title}
  {Constructing unextendible product bases from the old ones},}\ } (\bibinfo
  {year} {2017}{\natexlab{a}}),\ \Eprint {http://arxiv.org/abs/1703.06542}
  {arXiv:1703.06542 [quant-ph]} \BibitemShut {NoStop}%
\bibitem [{\citenamefont {Zhang}\ \emph
  {et~al.}(2017{\natexlab{a}})\citenamefont {Zhang}, \citenamefont {Zhang},
  \citenamefont {Gao}, \citenamefont {Wen},\ and\ \citenamefont {Oh}}]{lid33}%
  \BibitemOpen
  \bibfield  {author} {\bibinfo {author} {\bibfnamefont {Zhi-Chao}\
  \bibnamefont {Zhang}}, \bibinfo {author} {\bibfnamefont {Ke-Jia}\
  \bibnamefont {Zhang}}, \bibinfo {author} {\bibfnamefont {Fei}\ \bibnamefont
  {Gao}}, \bibinfo {author} {\bibfnamefont {Qiao-Yan}\ \bibnamefont {Wen}}, \
  and\ \bibinfo {author} {\bibfnamefont {C.~H.}\ \bibnamefont {Oh}},\
  }\bibfield  {title} {\enquote {\bibinfo {title} {Construction of nonlocal
  multipartite quantum states},}\ }\href {\doibase 10.1103/PhysRevA.95.052344}
  {\bibfield  {journal} {\bibinfo  {journal} {Phys. Rev. A}\ }\textbf {\bibinfo
  {volume} {95}},\ \bibinfo {pages} {052344} (\bibinfo {year}
  {2017}{\natexlab{a}})}\BibitemShut {NoStop}%
\bibitem [{\citenamefont {Xu}\ \emph {et~al.}(2017)\citenamefont {Xu},
  \citenamefont {Wen}, \citenamefont {Gao}, \citenamefont {Qin},\ and\
  \citenamefont {Zuo}}]{lid34}%
  \BibitemOpen
  \bibfield  {author} {\bibinfo {author} {\bibfnamefont {Guang-Bao}\
  \bibnamefont {Xu}}, \bibinfo {author} {\bibfnamefont {Qiao-Yan}\ \bibnamefont
  {Wen}}, \bibinfo {author} {\bibfnamefont {Fei}\ \bibnamefont {Gao}}, \bibinfo
  {author} {\bibfnamefont {Su-Juan}\ \bibnamefont {Qin}}, \ and\ \bibinfo
  {author} {\bibfnamefont {Hui-Juan}\ \bibnamefont {Zuo}},\ }\bibfield  {title}
  {\enquote {\bibinfo {title} {Local indistinguishability of multipartite
  orthogonal product bases},}\ }\href
  {https://link.springer.com/article/10.1007/s11128-017-1725-5} {\bibfield
  {journal} {\bibinfo  {journal} {Quantum Inf. Process.}\ }\textbf {\bibinfo
  {volume} {16}} (\bibinfo {year} {2017})}\BibitemShut {NoStop}%
\bibitem [{\citenamefont {Wang}\ \emph
  {et~al.}(2017{\natexlab{b}})\citenamefont {Wang}, \citenamefont {Li},
  \citenamefont {Zheng},\ and\ \citenamefont {Fei}}]{lid35}%
  \BibitemOpen
  \bibfield  {author} {\bibinfo {author} {\bibfnamefont {Yan-Ling}\
  \bibnamefont {Wang}}, \bibinfo {author} {\bibfnamefont {Mao-Sheng}\
  \bibnamefont {Li}}, \bibinfo {author} {\bibfnamefont {Zhu-Jun}\ \bibnamefont
  {Zheng}}, \ and\ \bibinfo {author} {\bibfnamefont {Shao-Ming}\ \bibnamefont
  {Fei}},\ }\bibfield  {title} {\enquote {\bibinfo {title} {The local
  indistinguishability of multipartite product states},}\ }\href@noop {}
  {\bibfield  {journal} {\bibinfo  {journal} {Quantum Inf. Process.}\ }\textbf
  {\bibinfo {volume} {16}} (\bibinfo {year} {2017}{\natexlab{b}})}\BibitemShut
  {NoStop}%
\bibitem [{\citenamefont {Zhang}\ \emph
  {et~al.}(2017{\natexlab{b}})\citenamefont {Zhang}, \citenamefont {Weng},
  \citenamefont {Tan},\ and\ \citenamefont {Luo}}]{lid36}%
  \BibitemOpen
  \bibfield  {author} {\bibinfo {author} {\bibfnamefont {Xiaoqian}\
  \bibnamefont {Zhang}}, \bibinfo {author} {\bibfnamefont {Jian}\ \bibnamefont
  {Weng}}, \bibinfo {author} {\bibfnamefont {Xiaoqing}\ \bibnamefont {Tan}}, \
  and\ \bibinfo {author} {\bibfnamefont {Weiqi}\ \bibnamefont {Luo}},\
  }\bibfield  {title} {\enquote {\bibinfo {title} {Indistinguishability of pure
  orthogonal product states by {LOCC}},}\ }\href@noop {} {\bibfield  {journal}
  {\bibinfo  {journal} {Quantum Inf. Process.}\ }\textbf {\bibinfo {volume}
  {16}} (\bibinfo {year} {2017}{\natexlab{b}})}\BibitemShut {NoStop}%
\bibitem [{\citenamefont {Zhang}\ \emph
  {et~al.}(2017{\natexlab{c}})\citenamefont {Zhang}, \citenamefont {Guo},
  \citenamefont {Luo},\ and\ \citenamefont {Tan}}]{lid37}%
  \BibitemOpen
  \bibfield  {author} {\bibinfo {author} {\bibfnamefont {Xiaoqian}\
  \bibnamefont {Zhang}}, \bibinfo {author} {\bibfnamefont {Cheng}\ \bibnamefont
  {Guo}}, \bibinfo {author} {\bibfnamefont {Weiqi}\ \bibnamefont {Luo}}, \ and\
  \bibinfo {author} {\bibfnamefont {Xiaoqing}\ \bibnamefont {Tan}},\
  }\href@noop {} {\enquote {\bibinfo {title} {Local distinguishability of
  quantum states in bipartite systems},}\ } (\bibinfo {year}
  {2017}{\natexlab{c}}),\ \Eprint {http://arxiv.org/abs/1712.08830}
  {arXiv:1712.08830 [quant-ph]} \BibitemShut {NoStop}%
\bibitem [{\citenamefont {Zhang}(2017)}]{lid38}%
  \BibitemOpen
  \bibfield  {author} {\bibinfo {author} {\bibfnamefont {Xiaoqian}\
  \bibnamefont {Zhang}},\ }\href@noop {} {\enquote {\bibinfo {title} {Local
  distinguishability of quantum states in multipartite system},}\ } (\bibinfo
  {year} {2017}),\ \Eprint {http://arxiv.org/abs/1712.08970} {arXiv:1712.08970
  [quant-ph]} \BibitemShut {NoStop}%
\bibitem [{\citenamefont {Croke}\ and\ \citenamefont {Barnett}(2017)}]{lid39}%
  \BibitemOpen
  \bibfield  {author} {\bibinfo {author} {\bibfnamefont {Sarah}\ \bibnamefont
  {Croke}}\ and\ \bibinfo {author} {\bibfnamefont {Stephen~M.}\ \bibnamefont
  {Barnett}},\ }\bibfield  {title} {\enquote {\bibinfo {title} {Difficulty of
  distinguishing product states locally},}\ }\href {\doibase
  10.1103/PhysRevA.95.012337} {\bibfield  {journal} {\bibinfo  {journal} {Phys.
  Rev. A}\ }\textbf {\bibinfo {volume} {95}},\ \bibinfo {pages} {012337}
  (\bibinfo {year} {2017})}\BibitemShut {NoStop}%
\bibitem [{\citenamefont {Ghosh}\ \emph {et~al.}(2022)\citenamefont {Ghosh},
  \citenamefont {Gupta}, \citenamefont {Ardra}, \citenamefont {Das~Bhowmik},
  \citenamefont {Saha}, \citenamefont {Guha},\ and\ \citenamefont
  {Mukherjee}}]{ghosh2022activating}%
  \BibitemOpen
  \bibfield  {author} {\bibinfo {author} {\bibfnamefont {Subhendu~B.}\
  \bibnamefont {Ghosh}}, \bibinfo {author} {\bibfnamefont {Tathagata}\
  \bibnamefont {Gupta}}, \bibinfo {author} {\bibfnamefont {A.~V.}\ \bibnamefont
  {Ardra}}, \bibinfo {author} {\bibfnamefont {Anandamay}\ \bibnamefont
  {Das~Bhowmik}}, \bibinfo {author} {\bibfnamefont {Sutapa}\ \bibnamefont
  {Saha}}, \bibinfo {author} {\bibfnamefont {Tamal}\ \bibnamefont {Guha}}, \
  and\ \bibinfo {author} {\bibfnamefont {Amit}\ \bibnamefont {Mukherjee}},\
  }\bibfield  {title} {\enquote {\bibinfo {title} {Activating strong
  nonlocality from local sets: An elimination paradigm},}\ }\href {\doibase
  10.1103/PhysRevA.106.L010202} {\bibfield  {journal} {\bibinfo  {journal}
  {Phys. Rev. A}\ }\textbf {\bibinfo {volume} {106}},\ \bibinfo {pages}
  {L010202} (\bibinfo {year} {2022})}\BibitemShut {NoStop}%
\bibitem [{\citenamefont {Gupta}\ \emph {et~al.}(2022)\citenamefont {Gupta},
  \citenamefont {Ghosh}, \citenamefont {Ardra}, \citenamefont {Bhowmik},
  \citenamefont {Saha}, \citenamefont {Guha}, \citenamefont {Rahaman},\ and\
  \citenamefont {Mukherjee}}]{gupta}%
  \BibitemOpen
  \bibfield  {author} {\bibinfo {author} {\bibfnamefont {Tathagata}\
  \bibnamefont {Gupta}}, \bibinfo {author} {\bibfnamefont {Subhendu~B.}\
  \bibnamefont {Ghosh}}, \bibinfo {author} {\bibfnamefont {A.~V.}\ \bibnamefont
  {Ardra}}, \bibinfo {author} {\bibfnamefont {Anandamay~Das}\ \bibnamefont
  {Bhowmik}}, \bibinfo {author} {\bibfnamefont {Sutapa}\ \bibnamefont {Saha}},
  \bibinfo {author} {\bibfnamefont {Tamal}\ \bibnamefont {Guha}}, \bibinfo
  {author} {\bibfnamefont {Ramij}\ \bibnamefont {Rahaman}}, \ and\ \bibinfo
  {author} {\bibfnamefont {Amit}\ \bibnamefont {Mukherjee}},\ }\href {\doibase
  10.48550/ARXIV.2202.03127} {\enquote {\bibinfo {title} {Genuine activation of
  quantum nonlocality: Stronger than local indistinguishability},}\ } (\bibinfo
  {year} {2022})\BibitemShut {NoStop}%
\bibitem [{\citenamefont {Halder}\ \emph {et~al.}(2019)\citenamefont {Halder},
  \citenamefont {Banik}, \citenamefont {Agrawal},\ and\ \citenamefont
  {Bandyopadhyay}}]{halder}%
  \BibitemOpen
  \bibfield  {author} {\bibinfo {author} {\bibfnamefont {Saronath}\
  \bibnamefont {Halder}}, \bibinfo {author} {\bibfnamefont {Manik}\
  \bibnamefont {Banik}}, \bibinfo {author} {\bibfnamefont {Sristy}\
  \bibnamefont {Agrawal}}, \ and\ \bibinfo {author} {\bibfnamefont
  {Somshubhro}\ \bibnamefont {Bandyopadhyay}},\ }\bibfield  {title} {\enquote
  {\bibinfo {title} {Strong quantum nonlocality without entanglement},}\ }\href
  {\doibase 10.1103/PhysRevLett.122.040403} {\bibfield  {journal} {\bibinfo
  {journal} {Physical review letters}\ }\textbf {\bibinfo {volume} {122}},\
  \bibinfo {pages} {040403} (\bibinfo {year} {2019})}\BibitemShut {NoStop}%
\bibitem [{\citenamefont {Sen}\ \emph {et~al.}(2022)\citenamefont {Sen},
  \citenamefont {Lobo}, \citenamefont {Naik}, \citenamefont {Patra},
  \citenamefont {Gupta}, \citenamefont {Ghosh}, \citenamefont {Saha},
  \citenamefont {Alimuddin}, \citenamefont {Guha}, \citenamefont {Bhattacharya}
  \emph {et~al.}}]{sen}%
  \BibitemOpen
  \bibfield  {author} {\bibinfo {author} {\bibfnamefont {Samrat}\ \bibnamefont
  {Sen}}, \bibinfo {author} {\bibfnamefont {Edwin~Peter}\ \bibnamefont {Lobo}},
  \bibinfo {author} {\bibfnamefont {Sahil~Gopalkrishna}\ \bibnamefont {Naik}},
  \bibinfo {author} {\bibfnamefont {Ram~Krishna}\ \bibnamefont {Patra}},
  \bibinfo {author} {\bibfnamefont {Tathagata}\ \bibnamefont {Gupta}}, \bibinfo
  {author} {\bibfnamefont {Subhendu~B}\ \bibnamefont {Ghosh}}, \bibinfo
  {author} {\bibfnamefont {Sutapa}\ \bibnamefont {Saha}}, \bibinfo {author}
  {\bibfnamefont {Mir}\ \bibnamefont {Alimuddin}}, \bibinfo {author}
  {\bibfnamefont {Tamal}\ \bibnamefont {Guha}}, \bibinfo {author}
  {\bibfnamefont {Some~Sankar}\ \bibnamefont {Bhattacharya}},  \emph {et~al.},\
  }\bibfield  {title} {\enquote {\bibinfo {title} {Local quantum state
  marking},}\ }\href {\doibase 10.1103/PhysRevA.105.032407} {\bibfield
  {journal} {\bibinfo  {journal} {Physical Review A}\ }\textbf {\bibinfo
  {volume} {105}},\ \bibinfo {pages} {032407} (\bibinfo {year}
  {2022})}\BibitemShut {NoStop}%
\bibitem [{\citenamefont {Kimble}(2008)}]{inta}%
  \BibitemOpen
  \bibfield  {author} {\bibinfo {author} {\bibfnamefont {H.~J.}\ \bibnamefont
  {Kimble}},\ }\bibfield  {title} {\enquote {\bibinfo {title} {The quantum
  internet},}\ }\href {\doibase 10.1038/nature07127} {\bibfield  {journal}
  {\bibinfo  {journal} {Nature}\ }\textbf {\bibinfo {volume} {453}},\ \bibinfo
  {pages} {1023--1030} (\bibinfo {year} {2008})}\BibitemShut {NoStop}%
\bibitem [{\citenamefont {Wehner}\ \emph {et~al.}(2018)\citenamefont {Wehner},
  \citenamefont {Elkouss},\ and\ \citenamefont {Hanson}}]{intb}%
  \BibitemOpen
  \bibfield  {author} {\bibinfo {author} {\bibfnamefont {Stephanie}\
  \bibnamefont {Wehner}}, \bibinfo {author} {\bibfnamefont {David}\
  \bibnamefont {Elkouss}}, \ and\ \bibinfo {author} {\bibfnamefont {Ronald}\
  \bibnamefont {Hanson}},\ }\bibfield  {title} {\enquote {\bibinfo {title}
  {Quantum internet: A vision for the road ahead},}\ }\href {\doibase
  10.1126/science.aam9288} {\bibfield  {journal} {\bibinfo  {journal}
  {Science}\ }\textbf {\bibinfo {volume} {362}},\ \bibinfo {pages} {eaam9288}
  (\bibinfo {year} {2018})},\ \Eprint
  {http://arxiv.org/abs/https://www.science.org/doi/pdf/10.1126/science.aam9288}
  {https://www.science.org/doi/pdf/10.1126/science.aam9288} \BibitemShut
  {NoStop}%
\bibitem [{ld()}]{ld}%
  \BibitemOpen
  \href@noop {} {}\bibinfo {note} {Local state discrimination (LSD) is a {well
  known distributed task} where one state randomly chosen from a known set
  $\mathcal{S}$ of orthogonal multipartite states is shared among spatially
  separated agents and their objective is to determine the identity of the
  given state by LOCC.}\BibitemShut {Stop}%
\bibitem [{set()}]{setorder}%
  \BibitemOpen
  \href@noop {} {}\bibinfo {note} {Since both $\{\ket{\psi_j},\ket{\psi_k}\}$
  and $\{\ket{\psi_k},\ket{\psi_j}\}$ represent the same set, we take $j<k$ to
  avoid double counting.}\BibitemShut {Stop}%
\bibitem [{com()}]{compare}%
  \BibitemOpen
  \href@noop {} {}\bibinfo {note} {Note that if a set $\mathcal{S}$ is locally
  distinguishable then obviously the set is locally markable as well as locally
  {identifiable.} Also, if the set $\mathcal{S}$ is $(n,m)$-unidentifiable, for
  $n>m>1$, it must be $(n,m)$-unmarkable. However, $(n,m)$-LSI or $(n,m)$-LSM
  task is not comparable in straight-forward sense to the LSD task for $m>1$ as
  the {last one} is defined only for $m=1$. Nevertheless, it is obvious that if
  a set is $(n,m)$-unidentifiable or $(n,m)$-unmarkable it must be locally
  indistinguishable.}\BibitemShut {Stop}%
\bibitem [{\citenamefont {Hayashi}\ \emph {et~al.}(2006)\citenamefont
  {Hayashi}, \citenamefont {Markham}, \citenamefont {Murao}, \citenamefont
  {Owari},\ and\ \citenamefont {Virmani}}]{Hayashi}%
  \BibitemOpen
  \bibfield  {author} {\bibinfo {author} {\bibfnamefont {M.}~\bibnamefont
  {Hayashi}}, \bibinfo {author} {\bibfnamefont {D.}~\bibnamefont {Markham}},
  \bibinfo {author} {\bibfnamefont {M.}~\bibnamefont {Murao}}, \bibinfo
  {author} {\bibfnamefont {M.}~\bibnamefont {Owari}}, \ and\ \bibinfo {author}
  {\bibfnamefont {S.}~\bibnamefont {Virmani}},\ }\bibfield  {title} {\enquote
  {\bibinfo {title} {Bounds on multipartite entangled orthogonal state
  discrimination using local operations and classical communication},}\ }\href
  {\doibase 10.1103/PhysRevLett.96.040501} {\bibfield  {journal} {\bibinfo
  {journal} {Phys. Rev. Lett.}\ }\textbf {\bibinfo {volume} {96}},\ \bibinfo
  {pages} {040501} (\bibinfo {year} {2006})}\BibitemShut {NoStop}%
\bibitem [{\citenamefont {Bandyopadhyay}(2012)}]{sb}%
  \BibitemOpen
  \bibfield  {author} {\bibinfo {author} {\bibfnamefont {Somshubhro}\
  \bibnamefont {Bandyopadhyay}},\ }\bibfield  {title} {\enquote {\bibinfo
  {title} {Entanglement, mixedness, and perfect local discrimination of
  orthogonal quantum states},}\ }\href {\doibase 10.1103/PhysRevA.85.042319}
  {\bibfield  {journal} {\bibinfo  {journal} {Phys. Rev. A}\ }\textbf {\bibinfo
  {volume} {85}},\ \bibinfo {pages} {042319} (\bibinfo {year}
  {2012})}\BibitemShut {NoStop}%
\bibitem [{\citenamefont {Rout}\ \emph {et~al.}(2019)\citenamefont {Rout},
  \citenamefont {Maity}, \citenamefont {Mukherjee}, \citenamefont {Halder},\
  and\ \citenamefont {Banik}}]{rout2019genuinely}%
  \BibitemOpen
  \bibfield  {author} {\bibinfo {author} {\bibfnamefont {Sumit}\ \bibnamefont
  {Rout}}, \bibinfo {author} {\bibfnamefont {Ananda~G}\ \bibnamefont {Maity}},
  \bibinfo {author} {\bibfnamefont {Amit}\ \bibnamefont {Mukherjee}}, \bibinfo
  {author} {\bibfnamefont {Saronath}\ \bibnamefont {Halder}}, \ and\ \bibinfo
  {author} {\bibfnamefont {Manik}\ \bibnamefont {Banik}},\ }\bibfield  {title}
  {\enquote {\bibinfo {title} {Genuinely nonlocal product bases: Classification
  and entanglement-assisted discrimination},}\ }\href {\doibase
  10.1103/PhysRevA.100.032321} {\bibfield  {journal} {\bibinfo  {journal}
  {Physical Review A}\ }\textbf {\bibinfo {volume} {100}},\ \bibinfo {pages}
  {032321} (\bibinfo {year} {2019})}\BibitemShut {NoStop}%
\bibitem [{\citenamefont {Rout}\ \emph {et~al.}(2021)\citenamefont {Rout},
  \citenamefont {Maity}, \citenamefont {Mukherjee}, \citenamefont {Halder},\
  and\ \citenamefont {Banik}}]{rout2021multiparty}%
  \BibitemOpen
  \bibfield  {author} {\bibinfo {author} {\bibfnamefont {Sumit}\ \bibnamefont
  {Rout}}, \bibinfo {author} {\bibfnamefont {Ananda~G}\ \bibnamefont {Maity}},
  \bibinfo {author} {\bibfnamefont {Amit}\ \bibnamefont {Mukherjee}}, \bibinfo
  {author} {\bibfnamefont {Saronath}\ \bibnamefont {Halder}}, \ and\ \bibinfo
  {author} {\bibfnamefont {Manik}\ \bibnamefont {Banik}},\ }\bibfield  {title}
  {\enquote {\bibinfo {title} {Multiparty orthogonal product states with
  minimal genuine nonlocality},}\ }\href {\doibase 10.1103/PhysRevA.104.052433}
  {\bibfield  {journal} {\bibinfo  {journal} {Physical Review A}\ }\textbf
  {\bibinfo {volume} {104}},\ \bibinfo {pages} {052433} (\bibinfo {year}
  {2021})}\BibitemShut {NoStop}%
\bibitem [{imp({\natexlab{a}})}]{imperfect1}%
  \BibitemOpen
  \href@noop {} {} ({\natexlab{a}}),\ \bibinfo {note} {if a set
  $\mathcal{S}=\lbrace \ket{\psi_i}\rbrace_{i=1}^{n} $ is $(n,m)-$ unmarkable
  that means there exists a larger set $\tilde{\mathcal{S}}$ of locally
  indistinguishable states $\lbrace \bigotimes_{i=1}^{m} \ket{\xi_i}\rbrace$
  with cardinality $^n P_m$, where every $\ket{\xi_i}\in \mathcal{S}$. The
  probability $P_{imp}$ for marking the above set by LOCC has an upper bound,
  $P_{imp}\leq \frac{d^m}{^n P_m}$ (see Lemma $8$ of \cite{Nathanson2005}).
  Here $d$ represents local dimension of the multipartite state from
  $\mathcal{S}$.}\BibitemShut {Stop}%
\bibitem [{imp({\natexlab{b}})}]{imperfect2}%
  \BibitemOpen
  \href@noop {} {} ({\natexlab{b}}),\ \bibinfo {note} {if a chosen set
  $\mathcal{S}=\lbrace \ket{\psi_i}\rbrace_{i=1}^{n}$ is complete set of
  $d\times d$ maximally entangled bases then such set is $(n,m)-$
  unidentifiable \textit{if} ${n\choose m} > d^m$ (see {Theorem \ref{guid}})
  which automatically implies that $P_{imp}< \frac{1}{m!}$ holds for such
  set.}\BibitemShut {Stop}%
\bibitem [{\citenamefont {Nathanson}(2005)}]{Nathanson2005}%
  \BibitemOpen
  \bibfield  {author} {\bibinfo {author} {\bibfnamefont {Michael}\ \bibnamefont
  {Nathanson}},\ }\bibfield  {title} {\enquote {\bibinfo {title}
  {Distinguishing bipartitite orthogonal states using {LOCC}: Best and worst
  cases},}\ }\href {\doibase 10.1063/1.1914731} {\bibfield  {journal} {\bibinfo
   {journal} {Journal of Mathematical Physics}\ }\textbf {\bibinfo {volume}
  {46}},\ \bibinfo {pages} {062103} (\bibinfo {year} {2005})}\BibitemShut
  {NoStop}%
\end{thebibliography}

\begin{thebibliography}{2}%
\makeatletter
\providecommand \@ifxundefined [1]{%
 \@ifx{#1\undefined}
}%
\providecommand \@ifnum [1]{%
 \ifnum #1\expandafter \@firstoftwo
 \else \expandafter \@secondoftwo
 \fi
}%
\providecommand \@ifx [1]{%
 \ifx #1\expandafter \@firstoftwo
 \else \expandafter \@secondoftwo
 \fi
}%
\providecommand \natexlab [1]{#1}%
\providecommand \enquote  [1]{``#1''}%
\providecommand \bibnamefont  [1]{#1}%
\providecommand \bibfnamefont [1]{#1}%
\providecommand \citenamefont [1]{#1}%
\providecommand \href@noop [0]{\@secondoftwo}%
\providecommand \href [0]{\begingroup \@sanitize@url \@href}%
\providecommand \@href[1]{\@@startlink{#1}\@@href}%
\providecommand \@@href[1]{\endgroup#1\@@endlink}%
\providecommand \@sanitize@url [0]{\catcode `\\12\catcode `\$12\catcode
  `\&12\catcode `\#12\catcode `\^12\catcode `\_12\catcode `\%12\relax}%
\providecommand \@@startlink[1]{}%
\providecommand \@@endlink[0]{}%
\providecommand \url  [0]{\begingroup\@sanitize@url \@url }%
\providecommand \@url [1]{\endgroup\@href {#1}{\urlprefix }}%
\providecommand \urlprefix  [0]{URL }%
\providecommand \Eprint [0]{\href }%
\providecommand \doibase [0]{https://doi.org/}%
\providecommand \selectlanguage [0]{\@gobble}%
\providecommand \bibinfo  [0]{\@secondoftwo}%
\providecommand \bibfield  [0]{\@secondoftwo}%
\providecommand \translation [1]{[#1]}%
\providecommand \BibitemOpen [0]{}%
\providecommand \bibitemStop [0]{}%
\providecommand \bibitemNoStop [0]{.\EOS\space}%
\providecommand \EOS [0]{\spacefactor3000\relax}%
\providecommand \BibitemShut  [1]{\csname bibitem#1\endcsname}%
\let\auto@bib@innerbib\@empty
\bibitem [{\citenamefont {Hayashi}\ \emph {et~al.}(2006)\citenamefont
  {Hayashi}, \citenamefont {Markham}, \citenamefont {Murao}, \citenamefont
  {Owari},\ and\ \citenamefont {Virmani}}]{Hayashi}%
  \BibitemOpen
  \bibfield  {author} {\bibinfo {author} {\bibfnamefont {M.}~\bibnamefont
  {Hayashi}}, \bibinfo {author} {\bibfnamefont {D.}~\bibnamefont {Markham}},
  \bibinfo {author} {\bibfnamefont {M.}~\bibnamefont {Murao}}, \bibinfo
  {author} {\bibfnamefont {M.}~\bibnamefont {Owari}},\ and\ \bibinfo {author}
  {\bibfnamefont {S.}~\bibnamefont {Virmani}},\ }\bibfield  {title} {\bibinfo
  {title} {Bounds on multipartite entangled orthogonal state discrimination
  using local operations and classical communication},\ }\href
  {https://doi.org/10.1103/PhysRevLett.96.040501} {\bibfield  {journal}
  {\bibinfo  {journal} {Phys. Rev. Lett.}\ }\textbf {\bibinfo {volume} {96}},\
  \bibinfo {pages} {040501} (\bibinfo {year} {2006})}\BibitemShut {NoStop}%
\bibitem [{\citenamefont {Nathanson}(2005)}]{nathanson}%
  \BibitemOpen
  \bibfield  {author} {\bibinfo {author} {\bibfnamefont {M.}~\bibnamefont
  {Nathanson}},\ }\bibfield  {title} {\bibinfo {title} {Distinguishing
  bipartitite orthogonal states using locc: Best and worst cases},\ }\href
  {https://doi.org/10.1063/1.1914731} {\bibfield  {journal} {\bibinfo
  {journal} {Journal of Mathematical Physics}\ }\textbf {\bibinfo {volume}
  {46}},\ \bibinfo {pages} {062103} (\bibinfo {year} {2005})},\ \Eprint
  {https://arxiv.org/abs/https://doi.org/10.1063/1.1914731}
  {https://doi.org/10.1063/1.1914731} \BibitemShut {NoStop}%
\end{thebibliography}


%

\begin{appendix}
\onecolumngrid
    \section{Appendix A: Locally identifiable subsets of Two-qubit Bell states}
{
\textbf{$(3,2)-$ Bell identifiability: }
Alice and Bob can successfully identify any subset of two Bell states $\{\ket{B_i}, \ket{B_j}\}_{i\neq j }$ given from a known set of three Bell states, $\{\ket{B_i}, \ket{B_j}, \ket{B_k}\}_{i\neq j\neq k}$ by LOCC only, where
\begin{align}
    \ket{B_{1,2}}=\frac{1}{\sqrt{2}}(\ket{00}\pm \ket{11}), \quad \quad \ket{B_{3,4}}=\frac{1}{\sqrt{2}}(\ket{01}\pm \ket{10}) \nonumber
\end{align}

The protocol is the following. Each agent (Alice or Bob) performs a Bell basis measurement $\mathcal{M}_\alpha:=\{P^\alpha_i|\sum_{i=1}^4 P^\alpha_i=\mathbb{I}_4\}$, where $P^\alpha_i:=\ket{B_i}_\alpha\bra{B_i}$, $\alpha\in\{\mathtt{A}_1\mathtt{A}_2,\mathtt{B}_1\mathtt{B}_2\}$, on the two qubits in their respective labs. Afterwards, the agents can know the identity of the shared subset by tallying their measurement outcomes via classical communication as the following table represents,
\begin{center}
\begin{table}[h]
\begin{tabular}{|cl|cl|cl|}
\hline
\multicolumn{2}{|c|}{$\mathcal{S}'_1$}   & \multicolumn{2}{c|}{$\mathcal{S}'_2$}   & \multicolumn{2}{c|}{$\mathcal{S}'_3$}   \\ \hline
\multicolumn{2}{|c|}{$(1,2)$} & \multicolumn{2}{c|}{$(1,4)$} & \multicolumn{2}{c|}{$(1,3)$} \\ \hline
\multicolumn{2}{|c|}{$(2,1)$} & \multicolumn{2}{c|}{$(2,3)$} & \multicolumn{2}{c|}{$(2,4)$} \\ \hline
\multicolumn{2}{|c|}{$(3,4)$} & \multicolumn{2}{c|}{$(3,2)$} & \multicolumn{2}{c|}{$(3,1)$} \\ \hline
\multicolumn{2}{|c|}{$(4,3)$} & \multicolumn{2}{c|}{$(4,1)$} & \multicolumn{2}{c|}{$(4,2)$} \\ \hline
\end{tabular}
    \caption{{Here, $(m,n)$ represents that  $m^{th}$ local projector  $P_m ^{\mathtt{A}_1\mathtt{A}_2}$ clicked in Alice's lab and $n^{th}$ local projector $P_n^{\mathtt{B}_1\mathtt{B}_2}$ clicked in Bob's lab, when both of them performs Bell basis measurement $\mathcal{M}_\alpha$, where $P^\alpha_i:=\ket{B_i}_\alpha\bra{B_i}$, $\alpha\in\{\mathtt{A}_1\mathtt{A}_2,\mathtt{B}_1\mathtt{B}_2\}$. The table states that for a specific subset, say, $\mathcal{S}'_1$, the joint outcome $(m,n)$ can be anything from the set $\lbrace (1,2), (2,1), (3,4), (4,3)\rbrace $.}  }
    \label{tab1}
\end{table}
\end{center}
\paragraph*{}
\paragraph*{}
\textbf{$(4,3)-$ Bell identifiability:} 
Alice and Bob can successfully identify any subset of three Bell states given from the set $\{\ket{B_i}\}_{i=1}^4$ using only LOCC. The protocol is the following. Each agent (Alice or Bob) performs a projective measurement $\mathcal{M}_\alpha:=\{P^\alpha_i|\sum_{i=1}^4 P^\alpha_i=\mathbb{I}_4\}$, where $P^\alpha_i:=\ket{G_i}_\alpha\bra{G_i}$, $\alpha\in\{\mathtt{A}_1\mathtt{A}_2\mathtt{A}_3,\mathtt{B}_1\mathtt{B}_2\mathtt{B}_3\}$, on the three qubits in their respective labs. Here, $\ket{G_i}$ is any of the $8$ orthogonal three-qubit GHZ states:
$\{\frac{1}{\sqrt{2}}(\ket{000}\pm\ket{111}),\frac{1}{\sqrt{2}}(\ket{001}\pm\ket{110}),\frac{1}{\sqrt{2}}(\ket{010}\pm\ket{101}),\frac{1}{\sqrt{2}}(\ket{100}\pm\ket{011})\}$. Afterwards, the agents can know the identity of the shared subset by tallying their measurement outcomes via classical communication. 

For example, if a subset $\mathcal{S}'_1$ is distributed among the agents then their shared three Bell states can be  either of the following six orders--- $\ket{B_1}_{\mathtt{A}_1\mathtt{B}_1}\otimes \ket{B_2}_{\mathtt{A}_2\mathtt{B}_2}\otimes \ket{B_3}_{\mathtt{A}_3\mathtt{B}_3}$ or, $\ket{B_1}_{\mathtt{A}_1\mathtt{B}_1}\otimes \ket{B_3}_{\mathtt{A}_2\mathtt{B}_2}\otimes \ket{B_2}_{\mathtt{A}_3\mathtt{B}_3}$ or, $\ket{B_2}_{\mathtt{A}_1\mathtt{B}_1}\otimes \ket{B_3}_{\mathtt{A}_2\mathtt{B}_2}\otimes \ket{B_1}_{\mathtt{A}_3\mathtt{B}_3}$ or, $\ket{B_2}_{\mathtt{A}_1\mathtt{B}_1}\otimes \ket{B_1}_{\mathtt{A}_2\mathtt{B}_2}\otimes \ket{B_3}_{\mathtt{A}_3\mathtt{B}_3}$ or, $\ket{B_3}_{\mathtt{A}_1\mathtt{B}_1}\otimes \ket{B_1}_{\mathtt{A}_2\mathtt{B}_2}\otimes \ket{B_2}_{\mathtt{A}_3\mathtt{B}_3}$ or, $\ket{B_3}_{\mathtt{A}_1\mathtt{B}_1}\otimes \ket{B_2}_{\mathtt{A}_2\mathtt{B}_2}\otimes \ket{B_1}_{\mathtt{A}_3\mathtt{B}_3}$. If Alice and Bob ignores the order of the distributed state then such shared state can be written as a convex mixture of each of the six possible orders. Hence, Alice and Bob has to discriminate a set of $4\choose 3$ density matrices $\{ \rho_i \}$, where 
\begin{align}
    \rho_i =\dfrac{1}{6} ~Perm_\mu \oper{B_{\alpha_{i}}}\otimes \oper{B_{\beta_{i}}} \otimes \oper{B_{\gamma_{i}}} , \quad \quad \mu=1,2,..,6, \quad \alpha_i \neq \beta_i \neq \gamma_i \in \{ 1,2,3,4\}
\end{align}
and further we have $Tr(\rho_i \rho_j)=\delta_{ij}$ for all $(i,j)$. The set of states $\{ \rho_i\}$ is said to be LOCC indistinguishable if and only if the support vectors of individual $\rho_i$ are LOCC indistinguishable. Now we must show whether the set of vectors $\{ \ket{B_{\alpha_i}}\ket{B_{\beta_i}}\ket{B_{\gamma_i}}\}_{i=1}^4$ are LOCC indistinguishable or not. One can always write $\ket{B_{\alpha_i}}\ket{B_{\beta_i}}\ket{B_{\gamma_i}}$ as 
\begin{align}
    \ket{B_{\alpha_i}}_{A_1 B_1}\ket{B_{\beta_i}}_{A_2 B_2}\ket{B_{\gamma_i}}_{A_3 B_3}= \dfrac{1}{2\sqrt{2}}\sum_{m_i=1}^{8}  \ket{G_{m_i}}_{A_1 A_2 A_3} \otimes U_{m'_{i} m_i} \ket{G_{m_i}}_{B_1 B_2 B_3}, \quad \forall i\in \{1,2,3,4\} \label{eq1}
\end{align}
where $U_{m'_{i} m_i}$ is local unitary acting on $\ket{G_{m_i}}$ such that $U_{m'_{i} m_i}\ket{G_{m_i}}=\mu_{m'_{i} m_i}\ket{G_{m'_i}}$ where $\mu_{m'_{i} m_i} \in \{ -1, +1\}$.  Now from Eq.$~$(\ref{eq1}) it is clear that if Alice measures her three-qubit in $\{ \ket{G_{m_i}}\}$ basis then for a given outcome $m_i =m_j=m \in \{ 1,2,...,8\}$, where $i\neq j$, the pair of collapsed states at Bob's side, \textit{i.e.} $\{ \ket{G_{m'_i}}, \ket{G_{m'_j}}\}$ must be pairwise orthogonal for all $(i,j) \in \{ 1,2,3,4\}$. Hence, the set of vectors $\{ \ket{B_{\alpha_i}}_{A_1 B_1}\ket{B_{\beta_i}}_{A_2 B_2}\ket{B_{\gamma_i}}_{A_3 B_3}\}_{i=1}^4$ are locally distinguishable which implies that $\{ \rho_i\}_{i=1}^4$ is LOCC distinguishable. 

\begin{center}
\begin{table}[h]
\begin{tabular}{|cl|cl|cl|cl|}
\hline
\multicolumn{2}{|c|}{$\mathcal{S}'_1$}   & \multicolumn{2}{c|}{$\mathcal{S}'_2$}   & \multicolumn{2}{c|}{$\mathcal{S}'_3$} & \multicolumn{2}{c|}{$\mathcal{S}'_4$}  \\ \hline
\multicolumn{2}{|c|}{$(1,4)$} & \multicolumn{2}{c|}{$(1,3)$} & \multicolumn{2}{c|}{$(1,7)$} & \multicolumn{2}{c|}{$(1,8)$} \\ \hline
\multicolumn{2}{|c|}{$(2,3)$} & \multicolumn{2}{c|}{$(2,4)$} & \multicolumn{2}{c|}{$(2,8)$} & \multicolumn{2}{c|}{$(2,7)$} \\ \hline
\multicolumn{2}{|c|}{$(3,2)$} & \multicolumn{2}{c|}{$(3,1)$} & \multicolumn{2}{c|}{$(3,5)$} & \multicolumn{2}{c|}{$(3,6)$} \\ \hline
\multicolumn{2}{|c|}{$(4,1)$} & \multicolumn{2}{c|}{$(4,2)$} & \multicolumn{2}{c|}{$(4,6)$} & \multicolumn{2}{c|}{$(4,5)$} \\ \hline
\multicolumn{2}{|c|}{$(5,8)$} & \multicolumn{2}{c|}{$(5,7)$} & \multicolumn{2}{c|}{$(5,3)$} & \multicolumn{2}{c|}{$(5,4)$} \\ \hline
\multicolumn{2}{|c|}{$(6,7)$} & \multicolumn{2}{c|}{$(6,8)$} & \multicolumn{2}{c|}{$(6,4)$} & \multicolumn{2}{c|}{$(6,3)$} \\ \hline
\multicolumn{2}{|c|}{$(7,6)$} & \multicolumn{2}{c|}{$(7,5)$} & \multicolumn{2}{c|}{$(7,1)$} & \multicolumn{2}{c|}{$(7,2)$} \\ \hline
\multicolumn{2}{|c|}{$(8,5)$} & \multicolumn{2}{c|}{$(8,6)$} & \multicolumn{2}{c|}{$(8,2)$} & \multicolumn{2}{c|}{$(8,1)$} \\ \hline
\end{tabular}
    \caption{{Here, $(m,n)$ represents the joint outcome for the $m^{th}$ projective measurement  $P_m ^{\mathtt{A}_1\mathtt{A}_2\mathtt{A}_3}$ performed by Alice in $\{ \ket{G_m}\}_{m=1}^8$ basis and $n^{th}$ projective measurement $P_n^{\mathtt{B}_1\mathtt{B}_2}\mathtt{B}_3$ of Bob in $\{ \ket{G_n}\}_{n=1}^8$ basis. Here each subset $\mathcal{S}'_i$ is chosen in a way, where $\mathcal{S}'_1= \{ \ket{B_1},\ket{B_2},\ket{B_3}\}$, $\mathcal{S}'_2= \{ \ket{B_1},\ket{B_2},\ket{B_4}\}$, $\mathcal{S}'_3= \{ \ket{B_1},\ket{B_3},\ket{B_4}\}$ and  $\mathcal{S}'_4= \{ \ket{B_2},\ket{B_3},\ket{B_4}\}$. The table states that for a specific subset, say, $\mathcal{S}'_1$, the joint outcome $(m,n)$ can be anything from the set $\lbrace (1,4), (2,3), (3,2), (4,1), (5,8), (6,7), (7,6), (8,5)\rbrace $.}}
\label{tab1}
\end{table}
\end{center}
\paragraph*{}
\textbf{$(4,3)-$ Bell unmarkability:} Although Alice and Bob can identity any given subset of three Bell states out of four but cannot mark the three states $\{ \ket{B_i}\}$ with LOCC only. The proof is very simple and directly follows from \cite{Hayashi}. The local dimension of three given Bell states is $d=8$ whereas, the cardinality for $(4,3)~$ Bell marking is $4 \times 3! =24$ which is greater than the local dimension. Hence, such set is locally unmarkable.   

}

{\section{Appendix B: Proof of Theorem 3}}
\paragraph*{}
{\textbf{Theorem 3.} 
Consider a complete basis set $\mathcal{S}$ of maximally entangled states (MES) in $\mathbb{C}^d \otimes \mathbb{C}^d$. There are $d^2 \choose k$ possible subsets, each containing $k$ distinct states from $\mathcal{S}$. The set $\mathcal{S}$ is ($d^2,k$)-unidentifiable, if ${d^2 \choose k}>d^k$. Moreover, the set of $D$ maximally entangled states ($D<d^2$) will also be $(D,k)$-unidentifiable, provided ${D \choose k}>d^k$.  }
\\
\begin{proof}
{Consider a maximally entangled state in $\mathbb{C}^d\otimes\mathbb{C}^d$ as $\ket{\Gamma}:=\tfrac{1}{\sqrt{d}}\sum_{j=0}^{d-1}\ket{jj}$. A complete basis set of pairwise orthogonal maximally entangled states, $\mathcal{S}=\lbrace |\Gamma_{l}\rangle \rbrace_{l=1}^{d^2}$ can be generated from the {unique} state $\ket{\Gamma}$ as $|\Gamma_{l}\rangle = (\mathbb{U}_l\otimes \mathbb{I})\ket{\Gamma}$, where $\{\mathbb{U}_l\}_{l=1}^{d^2}$ represents a complete set of Hilbert-Schmidt orthogonal unitary operators from $SU(d)$. A subset of $k$ distinct maximally entangled states can be chosen from $\mathcal{S}$ in $\kappa:={d^2 \choose k}$ ways. We denote the subsets as $\mathcal{S}'^{(d^2,k)}_i$, where $i$ runs from $1$ to $\kappa$.} 

Now, the objective is to distinguish the subsets via LOCC. This actually amounts to locally distinguishing corresponding mixed states $\{\rho_i^m\}_{i=1}^\kappa$, where each state $\rho_i^m$ is an equal classical mixture of all possible permutations of the composition of the $k$ pure states from $\mathcal{S}'^{(d^2,k)}_i$. These mixed states are clearly orthogonal to each other because, by construction, each subset contains at least one different element from the other. If the $\rho_i^m$s are distinguishable, then any set of $\kappa$ vectors, each chosen from the support of different $\rho_i^m$s in $\mathbb{C}^{d^k}\otimes\mathbb{C}^{d^k}$, are also distinguishable. However, any $\kappa$ pure maximally entangled states in $\mathbb{C}^{d^k}\otimes\mathbb{C}^{d^k}$ are locally indistinguishable if $\kappa>d^k$ \cite{Hayashi}. Clearly, the set $\{\rho_i^m\}_{i=1}^\kappa$ is locally indistinguishable, otherwise it leads to a contradiction. Evidently, if we consider a set of $D$ ($<d^2$) maximally entangled states, it can be easily shown following a similar approach as above, that the set will be $(D,k)$-unidentifiable for ${D \choose k}>d^k$. This completes our proof. 
\end{proof}

\section{Appendix C: Proof of Theorem 4}
\paragraph*{}
{\textbf{Theorem 4.}\quad Consider a complete basis set $\mathcal{S}$ of GHZ states in $\mathbb{C}^2 \otimes \mathbb{C}^2 \otimes \mathbb{C}^2$. The set $\mathcal{S}$ is locally $(D,k)$-unidentifiable if ${D\choose k}>2^{2k}$ (where
$D\in(2,8]$ and $k\in(1,D)$) 
even when two out of three parties come together in a same lab (collaborate). }
\\

{We start by providing a sketchy version of the proof involving $D=8, k=2$ for the sake of readability. The general version of the proof is given below.} 

{ As previously mentioned that $\mathcal{S}$ is a complete set of orthonormal tripartite $GHZ$ basis defined as $\mathcal{S}:=\{\frac{1}{\sqrt{2}}(\ket{000}\pm\ket{111}),\frac{1}{\sqrt{2}}(\ket{001}\pm\ket{110}),\frac{1}{\sqrt{2}}(\ket{010}\pm\ket{101}),\frac{1}{\sqrt{2}}(\ket{100}\pm\ket{011})\}$. First, note that the set $\mathcal{S}$ has the following property. If any two of the three spatially separated agents sharing an arbitrary state, say $|G_{\alpha}\rangle \in \mathcal{S}$ collaborate then they can transform $|G_{\alpha} \rangle $ to any other state, say $|G_{\beta} \rangle \in \mathcal{S}$ via implementing two-qubit unitary operations on their subsystems.
This property makes $\mathcal{S}$ locally indistinguishable, even if two agents collaborate \cite{nathanson}. Now, if the three agents share a pair of states from $\mathcal{S}$, say $\ket{G_\alpha}\otimes\ket{G_\beta}$ with $\alpha \neq \beta $ then they can transform the pair of shared states to any of the possible $^8P_2=56$ choices from the set $\tilde{S}=\{\ket{G_\alpha}\otimes\ket{G_\beta} | \alpha \neq \beta,~  \forall\{\alpha, \beta\} \in\{1,\cdots,8\}\}$, with any two agents collaborating and implementing four-qubit unitaries on their subsystems of the pair of three-qubit states. Following the arguments in \cite{nathanson}, the set $\{\ket{G_\alpha}\otimes\ket{G_\beta}\}_{\alpha<\beta=1}^8$ is locally indistinguishable even with two collaborating agents, as one can write ${8\choose 2}=26>2^4$.
On the other hand, perfect local identification of the subset of any two states from $\mathcal{S}$ necessarily implies perfect local discrimination of the set of 26 pairwise orthogonal mixed states, $\{\rho_{\alpha\beta}:=\frac{1}{2}(\oper{G_\alpha}\otimes\oper{G_\beta}+\oper{G_\beta}\otimes\oper{G_\alpha})\}_{\alpha<\beta=1}^8$ { with certainty}. But, perfect distinguishability of the set $\{\rho_{\alpha\beta}\}_{\alpha<\beta=1}^8$ ensures perfect discrimination of its support vectors, \textit{i.e.,} discrimination of the set of vectors, $\{\ket{G_\alpha}\otimes\ket{G_\beta}\}_{\alpha<\beta=1}^8$. Hence, the set $\{\rho_{\alpha\beta}\}_{\alpha<\beta=1}^8$ is locally indistinguishable and consequently, $\mathcal{S}$ is locally ($8,2$)-unidentifiable even if any two of the three agents perform any two-qubit operation. This readily rules out the scope of local identification when all three parties are separated. {Now, we discuss the general proof.} 
}

\begin{proof}
Firstly, note that elements of the set $\mathcal{S}$ follow the relation below,
\begin{align}
    \ket{G_{\alpha}} = (\mathbb{I}\otimes\mathbb{U}_{\alpha\beta}) \ket{G_{\beta}},~~\alpha,\beta\in\{1,\ldots,8\},
\end{align}
where, $\mathbb{U}_{\alpha\beta}$ is a unitary acting on two qubits and $\mathbb{I}$ is the identity operator acting on $\mathbb{C}^2$.
Naturally, any subset of $\mathcal{S}$ inherits the same property.

Let $\Bar{\mathcal{S}}:=\{\ket{G_{\alpha}}\}_{\alpha=1}^D$, $2<D\leq8$. There are $\kappa := {D\choose k}$ different ways in which $k$ states can be chosen from $\Bar{\mathcal{S}}$ to form subsets that we label as $\mathcal{S}'_i$, $i=1,\ldots,\kappa$. Locally identifying $\mathcal{S}'_i$ tantamounts to locally distinguishing the set of mixed states $\{\rho_i\}_{i=1}^{\kappa}$ where
\begin{align}
    \rho_i := \frac{1}{k!}\sum_{\mu=1}^{k!}&\ket{\Phi^i_\mu}\bra{\Phi^i_\mu},\\
    \ket{\Phi^i_\mu} = \text{Perm}_\mu \ket{G_{\alpha_1}}&\ket{G_{\alpha_2}}\ldots\ket{G_{\alpha_k}},~\mu=1,\ldots,k!,\nonumber\\\nonumber
    &\alpha_1\neq\alpha_2\ldots\neq\alpha_k\in\{1,\ldots,D\},\\\nonumber
    \text{such that,}~~~~\ket{G_{\alpha_1}},&\ket{G_{\alpha_2}},\ldots,\ket{G_{\alpha_k}}\in\mathcal{S}'_i.\nonumber
\end{align}
Here, $\text{Perm}_\mu \ket{G_{\alpha_1}}\ket{G_{\alpha_2}}\ldots\ket{G_{\alpha_k}}$ means different possible permutations of the state $\ket{G_{\alpha_1}}\ket{G_{\alpha_2}}\ldots\ket{G_{\alpha_k}}$. For example,  $~\text{Perm}_1 \ket{G_1}\ket{G_4}\ket{G_7}=\ket{G_1}\ket{G_4}\ket{G_7}$, $~\text{Perm}_2 \ket{G_1}\ket{G_4}\ket{G_7}=\ket{G_7}\ket{G_4}\ket{G_1}$, 
$\\$
$~\text{Perm}_3 \ket{G_1}\ket{G_4}\ket{G_7}=\ket{G_4}\ket{G_7}\ket{G_1}$ and so on. {Moreover, another property must hold for the set $\{ \rho_i \}_{i=1}^{\kappa}$ given by $Tr(\rho_i \rho_j)=\delta_{ij}$ since the support of $\rho_i$ \textit{i.e.,} $supp(\rho_i)$ is orthogonal to $supp(\rho_j)$ for all $(i,j)$.} We examine the {distinguishability} of $\mathcal{S}'_i$s (equivalently, distinguishability of the $\rho_i$s) when two out of the three qubits of each of the $k$ $\ket{G_{\alpha}}$s are in the same lab. 
Under such condition, it is possible to transform one permutation to any other by implementing local unitaries on the two qubits of each of the $k$ $\ket{G_{\alpha}}$s. Therefore,
\begin{align}
    \ket{G_{\alpha'_1}}\ket{G_{\alpha'_2}}\ldots\ket{G_{\alpha'_k}} \nonumber = (\mathbb{I}\otimes\mathbb{U}_{\alpha'_1\alpha_1})\ket{G_{\alpha_1}}(\mathbb{I}\otimes\mathbb{U}_{\alpha'_2\alpha_2})\ket{G_{\alpha_2}}\ldots(\mathbb{I}\otimes\mathbb{U}_{\alpha'_k\alpha_k})\ket{G_{\alpha_k}}.
\end{align}
Clearly, the set $\{\ket{\Phi^i_\mu}\}_{i=1}^{\kappa}\subset\mathbb{C}^{2^k}\otimes\mathbb{C}^{2^{2k}}$ has the property: $\ket{\Phi^i_\mu}=(\mathbb{I}\otimes\mathbb{U}_{ij,\mu\nu})\ket{\Phi^j_\nu}$ for $i,j=1,\ldots,\kappa$ {and} $\mu,\nu=1,\ldots,k!$ where $\mathbb{U}_{ij,\mu\nu}$s are $2k$-qubit unitaries. So, if $\kappa>2^{2k}$, $\{\ket{\Phi^i_\mu}\}_{i=1}^{\kappa}$ is not perfectly distinguishable \cite{nathanson}. 

On the other hand, if we assume that $\rho_i$s are distinguishable, then so are their supports. This implies that the set $\{\ket{\Phi^i_{\mu}}\}_{i=1}^\kappa$ is also distinguishable, as for each $i$, $\ket{\Phi^i_{\mu}} \in \text{Support}(\rho_i)$. But this leads to contradiction for $\kappa>2^{2k}$. So, the $\mathcal{S}'_i$s are locally {indistinguishable} in each bipartition, provided ${D\choose k}>2^{2k}$, that is, the sets $\mathcal{S}'_i$s are \textit{genuinely} unindentifiable.
\end{proof}

\section{Appendix D: Proof of Theorem 5}
\paragraph*{}

\begin{figure}[h!]
    \centering
    \includegraphics[height=9cm, width=19cm]{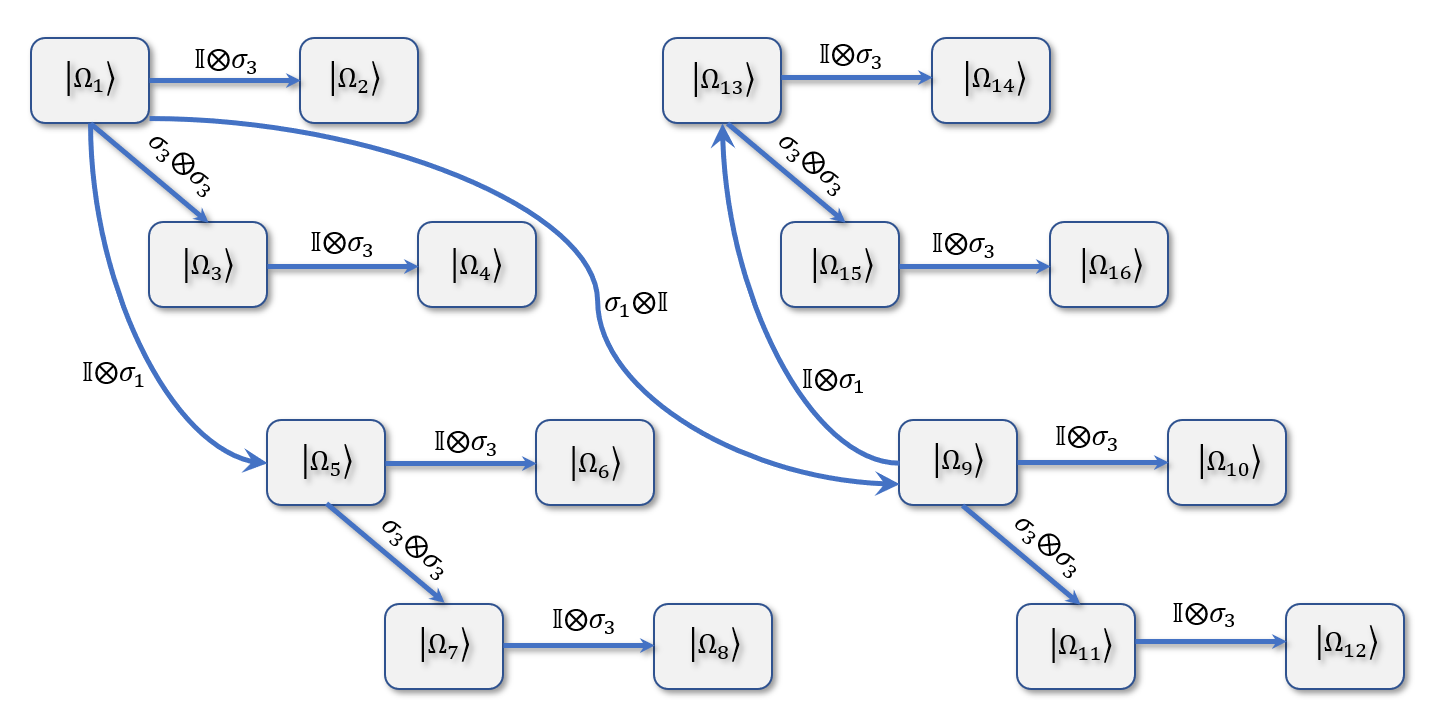}
    \caption{This diagram represents the possible transformation from the state $\ket{\Omega_{1}} \in \mathcal{S}$ to the remaining states $\{\ket{\Omega_{\alpha \neq 1}} \}\in \mathcal{S}$ through a set of unitary operations $\lbrace V_{\alpha \gamma}= \sigma_{\mu}\otimes \sigma_{\nu}\rbrace$, where $\mu, \nu =0,1,2,3$ that satisfies the condition $\ket{\Omega_{\alpha}} =(\mathbb{I}_4 \otimes V_{\alpha \gamma}) \ket{\Omega_{\gamma}}$, where $\mathbb{I}_4$ is an identity operation on $\mathbb{C}^2 \otimes \mathbb{C}^2$. One can trivially find other possible set of unitaries $\{ W_{\alpha \beta}\}$, for which the transformation $\ket{\Omega_{\alpha}} =(\mathbb{I}_2 \otimes W_{\alpha \beta}) \ket{\Omega_{\beta}}$ is satisfied.
    }
    \label{fig:my_label}
\end{figure}

{\textbf{Theorem 5.} \quad Consider a complete basis set $\mathcal{S}$ (set of states given in equation \ref{g4}) of genuinely entangled states in $\mathbb{C}^2 \otimes \mathbb{C}^2 \otimes \mathbb{C}^2 \otimes \mathbb{C}^2$. The set $\mathcal{S}$ is locally $(D,k)$-unidentifiable if ${D\choose k}>2^{3k}$ (where $D\in(2,16]$ and $k\in(1,D)$) in all $1:3$ as well $2:2$ bi-partitions.}
\\

\begin{proof}
{The basic approach for this proof will be similar as of \textbf{Theorem 4} but with a more general picture. So let us write a complete set of pairwise orthogonal four party states as $\mathcal{S}:=\{\ket{\Omega_{\alpha}}\}_{\alpha=1}^{16}$, where
{\small
\begin{subequations}\label{g4}
\begin{align}
    \ket{\Omega_{1,2}} &= \tfrac{1}{2} (\ket{0000}\pm\ket{0111}+\ket{1010}\pm\ket{1101}),\\
    \ket{\Omega_{3,4}} &= \tfrac{1}{2} (\ket{0000}\pm\ket{0111}-\ket{1010}\mp\ket{1101}),\\
    \ket{\Omega_{5,6}} &= \tfrac{1}{2} (\ket{0001}\pm\ket{0110}+\ket{1011}\pm\ket{1100}),\\
    \ket{\Omega_{7,8}} &= \tfrac{1}{2} (\ket{0001}\pm\ket{0110}-\ket{1011}\mp\ket{1100}),\\
    \ket{\Omega_{9,10}} &= \tfrac{1}{2} (\ket{0010}\pm\ket{0101}+\ket{1000}\pm\ket{1111}),\\
    \ket{\Omega_{11,12}} &= \tfrac{1}{2} (\ket{0010}\pm\ket{0101}-\ket{1000}\mp\ket{1111}),\\
    \ket{\Omega_{13,14}} &= \tfrac{1}{2} (\ket{0011}\pm\ket{0100}+\ket{1001}\pm\ket{1110}),\\
    \ket{\Omega_{15,16}} &= \tfrac{1}{2} (\ket{0011}\pm\ket{0100}-\ket{1001}\mp\ket{1110}).
    \end{align}
\end{subequations}} 
At first we need to prove that $\mathcal{S}:=\{\ket{\Omega_{\alpha}}\}_{\alpha=1}^{16}$ is locally indialignstinguishable. Thus we need to show that this set has the following properties: 
\begin{align}
    &|\Omega_{\alpha}\rangle =(\mathbb{I}\otimes W_{\alpha \beta}) ~\ket{\Omega_{\beta}} = (\mathbb{I}\otimes \mathbb{I}\otimes V_{\alpha \gamma})~\ket{\Omega_{\gamma}}, \quad \quad \forall{\alpha \neq \beta \neq \gamma} \in \{ 1,2,3,....,16\}
\end{align}
where $\mathbb{I}$ implies identity operation on a single party, $W_{\alpha \beta}$ implies a joint unitary acting on any three party {or equivalently on any $(1 : 3)$ bi-partition} and $V_{\alpha \gamma}$ implies a joint unitary action on any two party.  Now we will approach in a similar way as the proof of \textbf{Theorem 4.}

Let $\Bar{\mathcal{S}}:=\{\ket{\Omega_{\alpha}}\}_{\alpha=1}^D$, $2<D\leq16$. There are $\kappa := {D\choose k}$ different ways in which $k$ states can be chosen from $\Bar{\mathcal{S}}$ to form subsets that we label as $\mathcal{S}'_i$, $i=1,\ldots,\kappa$. Thus we have to see whether we can locally distinguish the set of mixed states $\{\rho_i\}_{i=1}^{\kappa}$ where
\begin{align}
    \rho_i := \frac{1}{k!}\sum_{\mu=1}^{k!}&\ket{\Phi^i_\mu}\bra{\Phi^i_\mu},\\
    \ket{\Phi^i_\mu} = \text{Perm}_\mu \ket{\Omega_{\alpha_1}}&\ket{\Omega_{\alpha_2}}\ldots\ket{\Omega_{\alpha_k}},~\mu=1,\ldots,k!,\nonumber\\\nonumber
    &\alpha_1\neq\alpha_2\ldots\neq\alpha_k\in\{1,\ldots,D\},\\\nonumber
    \text{such that,}~~~~\ket{\Omega_{\alpha_1}},&\ket{\Omega_{\alpha_2}},\ldots,\ket{\Omega_{\alpha_k}}\in\mathcal{S}'_i.\nonumber
\end{align}
Here, $\text{Perm}_\mu \ket{\Omega_{\alpha_1}}\ket{\Omega_{\alpha_2}}\ldots\ket{\Omega_{\alpha_k}}$ means different possible permutations of the state $\ket{\Omega_{\alpha_1}}\ket{\Omega_{\alpha_2}}\ldots\ket{\Omega_{\alpha_k}}$. 
We examine the {distinguishability} of $\mathcal{S}'_i$s (equivalently, distinguishability of the $\rho_i$s) when either two (2 : 2 bipartition) or three (1 : 3 bipartition) out of the four qubits of each of the $k$ elements from $\{\ket{\Omega_{\alpha_i}}\}$s are in the same lab. Under such condition, it is possible to transform one permutation, say $\mu$ to any other, say $\mu'$ by implementing local unitaries on the two qubits of each of the shared state $\ket{\Omega_{\alpha_i}}$ as 
\begin{align}
    &\ket{\Omega_{\alpha'_1}}\ket{\Omega_{\alpha'_2}}\ldots\ket{\Omega_{\alpha'_k}} \nonumber\\
    = &(\mathbb{I}\otimes W_{\alpha'_1\alpha_1})\ket{\Omega_{\alpha_1}}(\mathbb{I}\otimes W_{\alpha'_2\alpha_2})\ket{\Omega_{\alpha_2}}\ldots(\mathbb{I}\otimes W_{\alpha'_k\alpha_k})\ket{\Omega_{\alpha_k}} \nonumber \\
    =& (\mathbb{I}\otimes \mathbb{I}\otimes V_{\alpha_1 \gamma_1})~\ket{\Omega_{\gamma_1}} (\mathbb{I}\otimes \mathbb{I}\otimes V_{\alpha_2 \gamma_2})~\ket{\Omega_{\gamma_2}} \ldots ~ (\mathbb{I}\otimes \mathbb{I}\otimes V_{\alpha_k \gamma_k})~\ket{\Omega_{\gamma_k}}
\end{align}
Clearly, the support vectors of $\rho_i$ or the set $\{\ket{\Phi^i_\mu}\}_{i=1}^{\kappa}$ is a strict subset of $\mathbb{C}^{2^k}\otimes\mathbb{C}^{2^{3k}}$ as well as of $\mathbb{C}^{2^{2k}}\otimes\mathbb{C}^{2^{2k}}$ and has the property: 
\begin{align}
    \ket{\Phi^{i}_\mu} &=(\mathbb{I}\otimes\mathbb{U}_{ij,\mu\nu})\ket{\Phi^j_\nu}  \quad \quad \forall i,j=1,\ldots,\kappa, \mu,\nu=1,\ldots,k! \nonumber \\
    &= (\mathbb{I} \otimes \mathbb{I}\otimes \mathbb{W}_{il,\mu \delta})\ket{\Phi^l _\delta} \quad \quad \forall i,l=1,\ldots,\kappa, \mu,\delta=1,\ldots,k! 
\end{align}
 where $\mathbb{U}_{ij,\mu\nu}$s are $3k$-qubit unitaries and $\mathbb{W}_{il,\mu \delta}$ are $2k$ qubit unitaries. Thus the local dimension required for implementing the unitary $\mathbb{U}_{ij,\mu\nu}$ is $2^{3k}$ whereas for $\mathbb{W}_{il,\mu\delta}$, it is $2^{2k}$ and we know that $2^{3k}>2^{2k}$ for any $k>0$. Hence, if cardinality $\kappa$ of the set $\{\rho_i \}_{i=1}^{\kappa}$ satisfies the strict condition, $\kappa>2^{2k}$, then $\{\ket{\Phi^i_\mu}\}_{i=1}^{\kappa}$ is not perfectly distinguishable \cite{nathanson}. 

On the other hand, if we assume that $\rho_i$s are distinguishable, then so are their supports. This implies that the set $\{\ket{\Phi^i_{\mu}}\}_{i=1}^\kappa$ is also distinguishable, as for each $i$, $\ket{\Phi^i_{\mu}} \in \text{Support}(\rho_i)$. But this leads to contradiction for $\kappa>2^{2k}$. So, the $\mathcal{S}'_i$s are locally unidentifiable in each bipartition, provided ${D\choose k}>2^{2k}$, that is, the sets $\mathcal{S}'_i$s are \textit{genuinely} unindentifiable.}
\end{proof}

\end{appendix}

\end{document}